\def\eps{0.15}
\newcommand{\In}[2]{#1\!\in\! #2}
\newcommand{\x}{\In{x}{X}}
\newcommand{\p}{\mathbf{P}}
\newtheorem{lemma}{Lemma}
\newtheorem{proposition}{Proposition}
\newtheorem{theorem}{Theorem}
\newtheorem{corollary}{Corollary}
\theoremstyle{definition}
\newtheorem{definition}{Definition}
\newtheorem{example}{Example}
\title{Bayesian Persuasion without Commitment}
\begin{document}
\author{Itai Arieli\thanks{Department of Economics, University of Toronto and the Faculty of Data and Decision Sciences, Technion,
	itai.arieli@utoronto.ca.} \and Colin Stewart\thanks{Department of Economics, University of Toronto, colin.stewart@utoronto.ca.}}
    \maketitle

\maketitle

\begin{abstract}
    We introduce a model of persuasion in which a sender without any commitment power privately gathers information about an unknown state of the world and then chooses what to verifiably disclose to a receiver. The receiver does not know how many experiments the sender is able to run, and may therefore be uncertain as to whether the sender disclosed all of her information. Despite this challenge, we show that, under general conditions, the sender is able to achieve the same payoff as in the full-commitment Bayesian persuasion case.
\end{abstract}

\section{Introduction}
A central question in the theory of information design is how a sender can influence the action of a receiver by strategically disclosing information about an uncertain state. While much of the literature assumes that the sender can commit to a signal structure ex ante, in many environments commitment is implausible: the sender acquires information privately and can choose, after the fact, what to reveal. This paper studies such a setting in which the sender has no commitment power and the receiver reasons both about what the sender revealed and what she may have hidden.

We consider a model in which the sender designs and executes a set of experiments to learn about the underlying state. The total number of experiments the sender can run is determined exogenously by Nature. The sender strategically selects which experiments to perform and observes the realization of each. All of the sender's information acquisition is private: the receiver does not observe the number of experiments, their design, or their outcomes. After running the experiments, without commitment, the sender chooses which outcomes, if any, to reveal to the receiver (along with the design of the experiment from which it arose). The receiver, understanding both the sender's discretion over experiment choice and the possibility of selective disclosure, updates beliefs accordingly and selects an action. We identify a simple condition under which, in the sender-optimal equilibrium, the sender does as well as if she had full commitment power (that is, there is an equilibrium that implements the Bayesian persuasion solution).

As a motivating example, consider a corporation seeking to influence a regulatory decision on a major infrastructure project. To support its position, the firm can commission independent studies on various aspects such as economic impact, environmental effects, and job creation, but faces practical constraints like limited funding, time pressure, or expert availability, which are beyond its control and vary across projects. The firm strategically chooses which types of studies to conduct, observes the results, and then selectively discloses any subset of the findings to the regulator. Aware of the firm’s discretion over both the design and disclosure of studies, the regulator updates its beliefs about the project and makes a policy decision accordingly.

As another example, consider a pharmaceutical company preparing a drug approval application. It may conduct clinical trials targeting different populations or treatment variations, but institutional constraints such as limited patient availability, recruitment delays, or regulatory approvals restrict how many trials can be executed. The company strategically selects which trials to run, observes the outcomes, and includes only a subset of trials and their results in its submission. Aware of the potential for selective disclosure, the health authority evaluates the submitted evidence with caution before making an approval decision.

We introduce a model in which the sender can perform a limited number of simultaneous, conditionally independent experiments to learn about an unknown state. The number of experiments is determined randomly by Nature according to a known distribution and is observed only by the sender. Each experiment produces a signal about the state, and the sender observes all outcomes privately. After observing the results, the sender can choose to disclose any subset of the outcomes to a receiver. Each disclosed signal is labeled with the experiment from which it originated, ensuring that the receiver can correctly interpret the information contained in its outcome. The receiver, aware of the sender’s discretion over both experiment selection and selective disclosure, forms a belief about the state and chooses an action to maximize expected utility.

We show, under general conditions, the sender can achieve the Bayesian persuasion benchmark in equilibrium. This result holds despite the absence of commitment and despite the receiver's awareness that the sender may be withholding information. The conditions needed for this result are that (i) the is always able to run at least one experiment, and (ii) there exists a ``punishing'' action (possibly mixed) that the sender does not prefer in any state to any of the actions chosen in the Bayesian persuasion solution. The latter condition is guaranteed to hold when the sender has state-independent preferences (also known as ``transparent motives'').

The Bayesian persuasion solution can be achieved in a relatively simple way when the sender has transparent motives. In that case, it suffices to consider a single experiment whose outcomes correspond one-to-one with actions used in the Bayesian persuasion solution. The sender simply repeats the same experiment as many times as she is able to, and always reveals only one outcome---the one associated with her most preferred action among all of the outcomes she observed. We show that there always exists an experiment for which this strategy generates the same distribution of posterior beliefs for the receiver as in Bayesian persuasion.

With state-dependent preferences, the sender's disclosure decisions can depend in more complicated ways on the set of outcomes she observes. Moreover, the sender's information may run counter to her interests; for example, with a binary state, it could be that the sender wants to induce a low belief in the receiver precisely when her own belief is high. In this case, it suffices to consider two kinds of strategies for the sender: for any given number $k$ of experiments that she can run, she either runs $k$ identical experiments or she runs $k-1$ identical experiments along with one experiment that fully reveals the state. Either way, the sender discloses exactly one realized outcome from the repeated experiment; if she learns the state, she uses it only to inform her choice of which outcome to disclose. We employ a fixed point argument to show that there exists an experiment that generates the Bayesian persuasion solution when the sender chooses optimally from this class of strategies for each $k$. The resulting experiment can differ substantially from that in the standard one-shot optimal Bayesian persuasion policy. 

It is sometimes possible to obtain the Bayesian persuasion solution even if there is a positive probability that the sender cannot run any experiments. We provide a tight characterization for the case of transparent motives. To prevent deviations by the sender, it must be that disclosing nothing leads to the worst Bayesian persuasion action, $\underline{a}$, for the sender. The Bayesian persuasion solution can be attained in equilibrium if the probability that the sender cannot run any experiments is low enough that the posterior associated with $\underline{a}$ could arise by combining the prior belief when the sender runs no experiments with the posterior obtained conditional on running experiments and disclosing the worst-case outcome.

The original model of Bayesian persuasion \citep{kamenica2011bayesian} features a sender who can choose any experiment and fully commits to disclosing its outcome. In an extension, they show that the sender can achieve the Bayesian persuasion payoff if the choice of experiment is public and signals are verifiable. More recently, a number of papers have relaxed the full commitment assumption in various ways. The Bayesian persuasion payoff can also be obtained (under some conditions) if the sender sends cheap talk messages but has incentives to maintain a reputation \citep{best2024persuasion,mathevet2019reputation}. In other models, the sender cannot typically obtain the Bayesian persuasion payoffs. \citet{lipnowski2020cheap} analyze cheap talk environments in which the sender has no commitment power and messages are not verifiable. \citet{lipnowski2022persuasion} studies intermediate cases in which, with some probability, the sender can freely change the message observed by the receiver. In \citet{koessler2023informed}, the sender commits to the experiment only after learning the state of the world. In contrast to these models, in our model, the sender discloses verifiable information and is uninformed at the time of choosing the experiment.

\citet{dai2025bayesian} independently developed a model that, like ours, features a receiver who is uncertain about how many experiments the sender can run, the outcomes of which can be verifiably disclosed. In their model, unlike ours, the sender also commits to a public experiment and acquires information sequentially. In the case of transparent motives, under a stronger notion of equilibrium than we employ, they identify conditions under which the Bayesian persuasion payoff is not attainable for the sender. \citet{lou2023private} similarly considers a sender who can run multiple sequential experiments and disclose a subset of them. Unlike the present paper and \citet{dai2025bayesian}, in his model, a sender who discloses the outcome of one experiment must also disclose the outcomes of all previous experiments.

Beginning with \citet{glazer2004optimal}, a series of papers have analyzed games in which a sender with a private type can disclose exogenous verifiable evidence to a receiver, and have identified conditions under which the \emph{receiver} cannot benefit from commitment to a mechanism \citep{glazer2006study,sher2011credibility,hart2017evidence,ben2019mechanisms}. In contrast, we focus on the role of commitment to a disclosure policy by a \emph{sender} who is uninformed ex ante and optimally chooses what evidence to look for.

\section{Model}

There is an unknown state of the world $\omega \in \Omega$, where $\Omega=\{\omega_1,\ldots,\omega_n\}$ is a finite set. The state is realized according to a common prior probability $p \in \Delta(\Omega)$. We assume without loss of generality that $p_\omega>0$ for every $\omega\in\Omega$. The receiver has a finite action set $A$, and the receiver's utility is given by a function $u: A \times \Omega \to \mathbb{R}$, the sender's utility is given by $\tilde{v}: A\times \Omega \to \mathbb{R}$. When the sender has state-independent preferences, we sometimes omit the $\omega$ argument from $\tilde{v}$.

The sender can acquire information about the state by choosing a sequence of conditionally independent experiments and observing their realizations. The number of independent experiments that the sender can choose is determined at random according to some distribution $\nu \in \Delta(\mathbb{N_+})$; 
we write $\nu_k$ for the probability $\nu$ assigns to $k\in\mathbb{N}_+$ and refer to $k$ as the \emph{type} of the sender.  Information acquisition is not observed by the receiver. Let $S$ be a measurable set containing at least $n$ elements. 
An \emph{experiment} is a mapping $\pi:\Omega\to\Delta(S)$. Together with the prior $p$, an experiment induces a probability distribution $\p_\pi\in\Delta(\Omega\times S)$. Let $\Pi$ denote the set of all experiments.

The interaction between the sender and receiver takes place as follows. First, a number $k \in \mathbb{N}_+$ of experiments that the sender can conduct is drawn with probability $\nu_k$ and is observed only by the sender. The sender then chooses $k$ experiments $\pi_1, \dots, \pi_k\in\Pi $. Thereafter, the state $\omega \in \Omega$ is realized, and the experiments are conducted independently given $\omega$. The state is not directly observed by the sender; only the outcomes $(s_1, \dots, s_k)$ of the $k$ experiments are observed. The sender can then choose any subset $T \subseteq \{(\pi_1, s_1), \dots, (\pi_k, s_k)\}$ (with repetition allowed) to reveal to the receiver. The receiver observes the subset $T$ and chooses an action $a \in A$. 

Formally, a strategy for the sender consists, for each $k\in\mathrm{supp}\, \nu$, of an element of $\Pi^k$ indicating which experiments to run together with a mapping $\tau_k:S^k\longrightarrow 2^{\{1,\dots,k\}}$ indicating which outcomes to disclose. Let $\tau=(\tau_k)_{k\in\mathrm{supp}\,\nu}$ denote the sender's strategy.

For the receiver, let $\Delta^*$ be the set of all subsets $\{(\pi_1, s_1), \dots, (\pi_\ell, s_\ell)\}$ of $\Pi\times S$ of size $\ell$ for any finite $\ell \leq \sup\mathrm{supp}\, \nu$ (including $\ell=0$). A belief system for the receiver is a mapping $\rho: \Delta^* \to \Delta(\Omega)$ that assigns a posterior belief $\rho(T)$ to each $T \in \Delta^*$.\footnote{We omit beliefs about elements of the history other than the state $\omega$ because they are not relevant for equilibrium.} A strategy for the receiver is a mapping $\sigma: \Delta^*\longrightarrow \Delta (A)$ that specifies an action---possibly mixed---for each possible set of disclosed information.



Note that the sender's strategy $\tau$ induces a belief distribution $\mathbb{P}_\tau \in \Delta(\Delta^* \times \Omega)$.

\begin{definition}
A profile $(\tau, (\rho, \sigma))$ is a \emph{weak perfect Bayesian equilibrium} (wPBE) of the game if $\tau_k$ is a best reply to $(\rho, \sigma)$ for every $k$, $\rho (T)=\mathbb{P}_\tau(\cdot |T)$ for almost every $T$ (with respect to $\tau$), and $\sigma(T)$ is optimal given the belief $\rho (T)$ for almost every $T$.
\end{definition}

\section{Main Result}

Note that, in any wPBE, the distribution of posterior beliefs for the receiver must be Bayes plausible. Thus the maximal expected utility that the sender can achieve in any wPBE is bounded above by the Bayesian persuasion utility. We ask whether the sender can achieve this utility in our setting.

\begin{definition}
Given a subset of actions $B\subseteq A$, the receiver has a \emph{punishing action} with respect to $B$ if there exists a posterior belief $q\in\Delta(\Omega)$ and a distribution over actions $x\in\Delta(A)$ such that $x$ is a best reply for the receiver to the belief $q$ and $\sum_{a\in A}x(a)\tilde{v}(a,\omega)\leq \tilde{v}(b,\omega)$ for every $b\in B$ and $\omega\in\Omega$.     
\end{definition}

Thus a punishing action exists with respect to $B$ if the action distribution $x$ is worse for the sender than any $b\in B$ at every state $\omega\in\Omega$. For example, if the sender has transparent motives, then for every $B$, there exists a punishing action with respect to $B$ (namely, the sender's least preferred action in $B$).

We henceforth assume for simplicity that there exists a unique Bayesian persuasion policy. By the revelation principle, we can identify this policy with a direct recommendation policy $P\in\Delta(A\times\Omega)$. Let $\tilde A$ be the set of actions that are played with positive probability under $P$.

\begin{theorem}\label{th:information}
If the receiver has a punishing action with respect to $\tilde A$, then there exists a wPBE in which the sender achieves the Bayesian persuasion utility. 
\end{theorem}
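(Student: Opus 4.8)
The plan is to construct a single ``target'' experiment $\pi$ whose outcomes are indexed by the actions in $\tilde A$, and to build a wPBE in which, on the equilibrium path, the sender runs copies of $\pi$ and discloses exactly one outcome, thereby inducing for the receiver precisely the posterior beliefs of the Bayesian persuasion policy $P$; every other disclosure (the empty set, more than one outcome, or any outcome carrying an experiment label other than $\pi$) is off the path. The key structural observation is that wPBE places no restriction on off-path beliefs, so I can assign to every off-path disclosure the belief $q$ from the definition of a punishing action and have the receiver respond with the punishing distribution $x$. Because $\sum_a x(a)\tilde v(a,\omega)\le\tilde v(b,\omega)$ for every $b\in\tilde A$ and every $\omega$, triggering the punishment is worse for the sender, in every state and hence under every belief, than disclosing any single on-path outcome; this single inequality simultaneously deters withholding all information, disclosing more than one outcome, and disclosing the outcome of any experiment other than $\pi$ (the latter being detectable through the verifiable experiment label, and withholding being feasible to avoid only because $\nu\in\Delta(\mathbb{N}_+)$ guarantees at least one outcome to disclose).

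The core is then to choose $\pi$ so that selective disclosure reproduces $P$. Order $\tilde A=\{a_1,\dots,a_m\}$ by the sender's preference and let $g(t)=\sum_{k\ge1}\nu_k t^k$ be the probability generating function of $\nu$; since $\nu\in\Delta(\mathbb{N}_+)$ we have $g(0)=0$ and $g(1)=1$, so $g$ is a continuous strictly increasing bijection of $[0,1]$. Writing $G_j(\omega)$ for the $P$-probability of recommending an action ranked $j$ or higher in state $\omega$, I would define the tail probabilities $F_j(\omega)=g^{-1}(G_j(\omega))$ and set $\pi(s_j\mid\omega)=F_j(\omega)-F_{j+1}(\omega)$. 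Monotonicity of $g^{-1}$ makes this a valid experiment, and if the sender runs $k$ copies and discloses her most preferred observed outcome, then $s_j$ is disclosed in state $\omega$ with probability $g(F_j(\omega))-g(F_{j+1}(\omega))=G_j(\omega)-G_{j+1}(\omega)$; telescoping shows the joint law of (disclosed outcome, state) equals $P$, so the posterior after $s_j$ is exactly the Bayesian persuasion posterior and $a_j$ is a best reply. Under transparent motives this finishes the argument, since the sender's preferred outcome does not depend on the state and the disclosure rule above is manifestly a best reply.

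The hard part is state-dependent preferences, where the outcome the sender would like to disclose depends on the realized state, and she may privately run one fully revealing experiment to learn it and then cherry-pick among her observed copies of $\pi$. This deviation is not deterred by the punishing action (the revealing experiment is never disclosed), so the experiment $\pi$ must be designed to be self-consistent against it. I would restrict attention to the menu in which, for each type $k$, the sender either runs $k$ copies of $\pi$ or runs $k-1$ copies together with one fully revealing experiment and discloses a single copy outcome, and set up a map sending a candidate experiment to the receiver-posterior distribution induced when the sender discloses optimally within this menu. A fixed point of this map is an experiment whose induced posteriors coincide with those of $P$; I would obtain one via Kakutani's theorem on the compact convex set of candidate experiments, parametrized by the profile $(F_j(\omega))$.

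The step I expect to be the main obstacle is this fixed point. The sender's optimal state-contingent disclosure rule jumps as posteriors cross indifference boundaries, so the induced-posterior correspondence need not be single-valued or continuous; establishing upper hemicontinuity with nonempty convex values, for instance by convexifying over the set of optimal disclosure rules at each candidate, is the delicate part. A secondary difficulty is verifying that at the fixed point the two menu strategies indeed aggregate, across the type distribution $\nu$, to exactly the Bayes-plausible family of Bayesian persuasion posteriors, so that the ``learn-the-state-and-cherry-pick'' option reproduces rather than distorts the Bayesian persuasion outcome.
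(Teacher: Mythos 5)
Your overall architecture matches the paper's closely: a single experiment whose outcomes are indexed by $\tilde A$, the punishing action assigned to every off-path disclosure (with $\nu\in\Delta(\mathbb{N}_+)$ ensuring $\varnothing$ is off path), and, for state-dependent preferences, a Kakutani argument over the menu in which type $k$ runs either $k$ copies of $\pi$ or $k-1$ copies plus one fully revealing experiment whose outcome is never disclosed---this menu is exactly the paper's auxiliary decision problem, and the paper justifies it the same way you do. The genuine gap is in your fixed-point map. You propose the correspondence $\Phi$ sending a candidate experiment $Q$ to the induced joint law of (disclosed signal, state) under optimal disclosure, and assert that a fixed point of $\Phi$ is an experiment whose induced posteriors coincide with those of $P$. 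That is not what a fixed point of $\Phi$ delivers: $Q\in\Phi(Q)$ says selective disclosure from copies of $Q$ reproduces $Q$ \emph{itself}, a self-consistency property, not the property you actually need, namely $Q^*\in\Phi(Q)$ where $Q^*_{ij}=P(a_j\mid\omega_i)$. The two are different. For instance, any experiment that is deterministic conditional on the state (e.g., $\pi(s_1\mid\omega)=1$ for all $\omega$) is a fixed point of $\Phi$, since all realized draws coincide and disclosure reproduces the experiment, yet it implements nothing like $P$; conversely, $Q^*\notin\Phi(Q^*)$ in general, because cherry-picking among several copies of the Bayesian persuasion experiment distorts its distribution---that distortion is the entire difficulty. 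The paper instead applies Kakutani to the displacement correspondence $T_\alpha(Q)=\{Q+\alpha(Q^*-Q') : Q'\in\Phi(Q)\}$, whose fixed points satisfy precisely $Q^*\in\Phi(Q)$. This changes what must be proved: beyond the upper hemicontinuity and convex compact values you anticipate (which the paper gets because optimal mixed strategies form a simplex and $\psi\mapsto Q^\psi$ is linear), the substantive extra step is showing that for small $\alpha$ the map $T_\alpha$ sends the set of admissible row-stochastic matrices into itself. This uses the bound that the probability of disclosing $s_j$ in state $\omega_i$ is at most $\mathbb{E}_\nu\left[1-(1-Q_{ij})^k\right]$, which lies below $\min\{Q^*_{ij}:Q^*_{ij}>0\}$ whenever $Q_{ij}$ is small, so the entries of $Q+\alpha(Q^*-Q')$ stay nonnegative. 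Without replacing your map by something of this displacement form, the argument does not go through.

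A smaller issue: your generating-function construction for transparent motives has an indexing error. With $F_j=g^{-1}(G_j)$ defined from tail probabilities, the probability that the best of $k$ i.i.d.\ draws is ranked $j$ or higher is $1-(1-F_j)^k$, which averages under $\nu$ to $1-g(1-F_j)$, not $g(F_j)$, so the telescoping you claim fails. The fix is to cumulate from the bottom: let $H_j(\omega)$ be the $P$-probability of an action ranked $j$ or \emph{lower}, set $F_j(\omega)=g^{-1}(H_j(\omega))$ and $\pi(s_j\mid\omega)=F_j(\omega)-F_{j-1}(\omega)$; then ``all $k$ draws ranked at most $j$'' has probability $F_j(\omega)^k$, which averages to $H_j(\omega)$, and the disclosed (maximal) signal equals $s_j$ with probability exactly $P(a_j\mid\omega)$. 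With that correction, your closed-form construction is actually cleaner than the paper's treatment of this case, which builds the experiment by induction on the number of posteriors from a binary lemma and, for the induction step, needs a version of that lemma in which the type distribution is allowed to depend on the state.
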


Proofs are in the Appendix.

Unlike the standard persuasion argument to establish the existence of the sender-optimal equilibrium with a Bayesian persuasion policy, we use a fixed point argument.

We first illustrate the theorem with an example.

\begin{example}\label{ex:product_adoption} Consider the classic product adoption example where the state space $\Omega=\{\omega_0,\omega_1\}$ represents the quality of a product which can be low ($\omega=\omega_0)$ or high ($\omega=\omega_1$), with equal prior probability. The action set is $A=\{a_0,a_1\}$. Action $a_0$ represents a non-adoption action that yields a payoff of $0$ to both sender and receiver. Action $a_1$ represents adoption and yields a payoff of $1$ to the sender but is risky for the receiver, with $u(a_1\omega_0)=-3$ and $u(a_1,\omega_1)=1$. Thus, the receiver would like to adopt the product if and only if she assigns probability at least $3/4$ to state $\omega_1$. The sender's indirect utility is depicted in Figure \ref{fig:product_adoption}.

\begin{figure}[h]\label{fig:product_adoption}
\centering
\begin{tikzpicture}[x=3.5cm, y=3.0cm]

 \draw[->] (0,0) -- (0,1.2) node[above left] {Sender's utility};
  \draw[->] (0,0) -- (1.1,0) node[below right] {Receiver's belief};

  \foreach \x in {0, 0.5, 0.75, 1}
    \draw (\x,0) -- (\x,0.03);

  \node[below] at (0,-0.03) {$0$};
  \node[below] at (0.5,-0.03) {$\tfrac{1}{2}$};
  \node[below=6pt] at (0.75,0) {$\tfrac{3}{4}$};
  \node[below] at (1,-0.03) {$1$};

  \draw (-0.02,1) -- (0.02,1) node[left=4pt] {$1$};

  \draw[blue, very thick] (0,0) -- (0.75,0);   
  \draw[blue, very thick] (0.75,1) -- (1,1);   

  \filldraw[blue, fill=white, thick] (0.75,0) circle (0.016);
  \filldraw[blue] (0.75,1) circle (0.016);

  \draw[dashed] (0,0) -- (0.75,1);

\end{tikzpicture}
\caption{The sender's indirect utility for Example \ref{ex:product_adoption}. The dashed line depicts the concavification associated with the Bayesian persuasion solution.}
\end{figure}
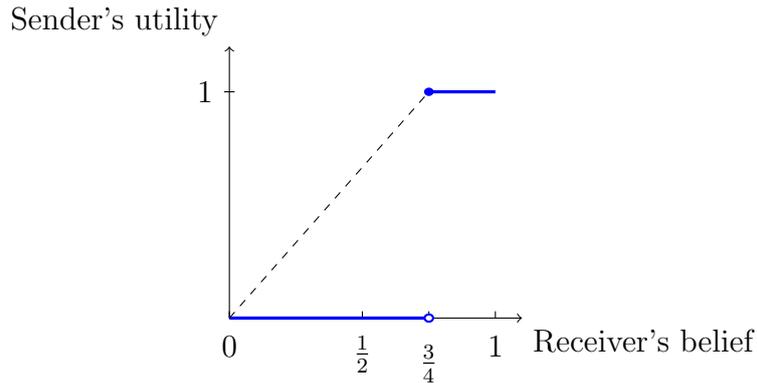

The Bayesian persuasion utility is implemented by the distribution of posteriors $\frac{1}{3}\delta_0+\frac{2}{3}\delta_{\frac 3 4}$, where $\delta_q$ denotes the distribution placing unit mass on $q$. 
Suppose the number of experiments available to the sender is distributed geometrically with parameter $r$, that is, that $\nu_k=(1-r)^{k-1}r$ for each $k$. We will illustrate how to generate a wPBE that implements the Bayesian persuasion distribution.

Consider the experiment $F$ that generates the posterior distribution
$$(1-\alpha)\delta_0+\alpha\delta_{\frac{r+2}{2r+2}},$$
where $\alpha=\frac{2r+2}{2r+4}.$ This posterior distribution is achieved by an experiment $F:\Omega\to \Delta(S)$ for $S=\{s_0,s_1\}$ with $F(\omega_1)[s_1]=1$ and $F(\omega_0)[s_1]=\frac{r}{r+2}.$

Consider the following strategy $\tau$ for the sender. For each $k$, repeat the experiment $F$ $k$ times. If there is at least one realized $s_1$ signal, reveal $\{(F,s_1)\}$ to the receiver; otherwise, reveal $\{(F,s_0)\}$ to the receiver.


We first claim that, on the equilibrium path, the receiver's beliefs in a wPBE must be $\rho(\{F,s_1\})=3/4$ if $\{(F,s_1)\}$ is revealed and $\rho(\{F,s_0\})=0$ if $\{(F,s_0)\}$ is revealed. The latter follows immediately from the construction of $F$; for the former, we must show that 
$$P_\tau(\omega_1|\{(F,s_1)\})=\frac{3}{4}.$$
To see this, note that the signal $s_1$ is realized with probability one given $\omega_1$. Therefore, $P_\tau(\{(F,s_1)\}|\omega_1)=1.$
By definition,
\begin{align*}
    P_\tau(\{(F,s_0)\}|\omega_0)=\sum_{k=1}^\infty p(1-p)^{k-1}\left(1-\frac{p}{p+2}\right)^k=\frac{2}{3}.
\end{align*}
Therefore, $P_\tau(\omega_1|\{(F,s_1)\})=3/4$, as desired. 
It is straightforward to verify that the sender has no profitable deviation.

To interpret the construction, notice that the positive signal in the experiment has a posterior $\frac{r+2}{2r+2}>\frac{3}{4}$. Thus, in order to compensate for the possibility that the sender can run multiple experiments, to persuade the receiver to take action $a_1$, a high realization from any one experiment must provide stronger evidence in favor of the high state than does the high signal in the Bayesian persuasion solution.  
\end{example}

\subsection{Outline of Proof of Theorem \ref{th:information}}

It turns out that the equilibrium construction in Example \ref{ex:product_adoption} applies more generally as long as the sender has transparent motives (in which case a punishing action always exists). In that case, as we show in the appendix, there exists an experiment $F:\Omega\to\Delta(S)$ where each signal $s_j\in S=\{s_1,\dots,s_m\}$ corresponds to an action $a_j$ such that $\tilde v(a_j)\leq \tilde v(a_i)$ for $j<i$. The strategy for the sender of type $k$ is to draw $k$ conditionally i.i.d.\ signals $\{s^1,\ldots,s^k\}$ according to $F$ and then disclose $\{(F,s_i)\}$ where $i=\arg\max\{j: s_j=s^l \text{ for some } l\}$. The receiver then plays the corresponding action $a_i$.

Outside of the transparent motives case, the sender's preferences over actions may depend on her posterior belief, making it no longer possible to order actions by the sender's preference. The sender's optimal disclosure strategy may therefore depend in complicated ways on the realized profile of signals. To overcome this difficulty, we use a fixed point argument. The following example illustrates the basic logic.

\begin{example}\label{ex:state-dependent}
Let $\Omega=\{\omega_0,\omega_1\}$. The prior is $1/2$, where we identify the prior and posterior with the probability of $\omega_1$. The receiver's action set is $A=\{a_0,a_1,a_2\}$. Action $a_0$ is optimal for the receiver for posteriors in $[0,1/3]$, action $a_1$ is optimal for posteriors in $[1/3,2/3]$, and action $a_2$ is optimal for posteriors in $[2/3,1]$. The sender's utility is $\tilde v(a_0,\omega_0)=0$, $\tilde v(a_0,\omega_1)=3/2$, $\tilde v(a_1,\omega)=0$ for each $\omega$, $\tilde v(a_2,\omega_1)=0$, and $\tilde v(a_2,\omega_0)=3/2$. Note that $a_1$ is a punishing action. The sender's indirect utility is depicted in Figure \ref{fig:state-dependent}.

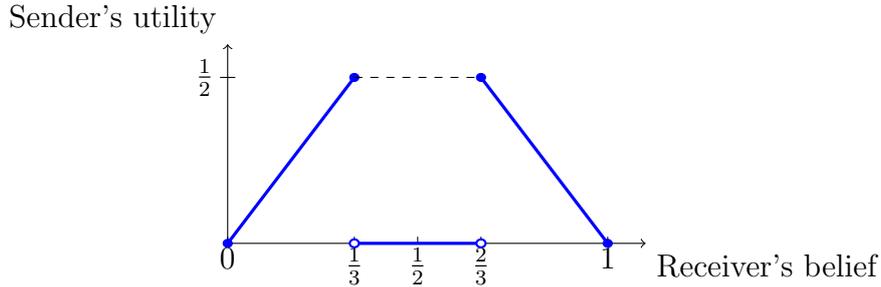
\begin{figure}[h]\label{fig:state-dependent}
\centering
\begin{tikzpicture}[x=5cm, y=4.4cm]
\draw[->] (0,0) -- (0,0.6) node[above left] {Sender's utility};
  \draw[->] (0,0) -- (1.1,0) node[below right] {Receiver's belief};

  \foreach \x/\lab in {0/{0}, 0.333333/{\tfrac{1}{3}}, 0.5/{\tfrac{1}{2}}, 0.666667/{\tfrac{2}{3}}, 1/{1}}
    \draw (\x,0) -- (\x,0.02) node[below] {$\lab$};

  \draw (-0.02,0.5) -- (0.02,0.5) node[left=4pt] {$\tfrac{1}{2}$};

  \draw[blue, very thick] (0,0) -- (0.333333,0.5);          
  \draw[blue, very thick] (0.333333,0) -- (0.666667,0);      
  \draw[blue, very thick] (0.666667,0.5) -- (1,0);          

  \filldraw[blue, fill=white, thick] (0.333333,0) circle (0.012);
  \filldraw[blue, fill=white, thick] (0.666667,0) circle (0.012);

  \filldraw[blue] (0,0) circle (0.012);
  \filldraw[blue] (0.333333,0.5) circle (0.012);
  \filldraw[blue] (0.666667,0.5) circle (0.012);
  \filldraw[blue] (1,0) circle (0.012);

  \draw[dashed] (0.333333,0.5) -- (0.666667,0.5);

\end{tikzpicture}
\caption{The sender's indirect utility for Example \ref{ex:state-dependent}.}
\end{figure}

Notice that the sender's interests are opposed to the receiver's in the sense that, if they had the same belief, they would never agree on the optimal action: whenever the receiver prefers action $a_2$ to action $a_0$, a sender with the same belief would prefer $a_0$ to $a_2$, and vice versa. Thus, for example, a sender with belief above $1/2$ would ideally like to induce a belief of at most $1/3$ in the receiver. One might expect this conflict to create difficulties in persuading the receiver, yet Theorem \ref{th:information} indicates that the Bayesian persuasion payoff is achievable for the sender.

Let $\nu=(\nu_k)_{k\geq 1}$ be any distribution of types for the sender. Consider any experiment $F:\Omega\to\{s_1,s_2\}$. Let $x=F(\omega_1)[s_1]$ and $y=F(\omega_0)[s_1]$.

Suppose the receiver plays action $a_0$ if the sender discloses $\{(F,s_1)\}$, plays action $a_2$ if the sender discloses $\{(F,s_2)\}$, and plays action $a_1$ otherwise. It follows that whenever the sender assigns posterior probability greater than $1/2$ to $\omega_1$, she will disclose $\{(F,s_1)\}$ whenever she has received at least one signal equal to $s_1$; similarly, whenever she assigns probability less than $1/2$ to $\omega_1$ she will disclose $\{(F,s_2)\}$ whenever it is feasible to do so. This disclosure strategy generates a probability distribution $P\in\Delta(\Omega\times A)$ where, for example, $P(a_0|\omega_1)$ is the conditional probability that action $a_0$ is played in state $\omega_1$.

Given this disclosure strategy, define a mapping by
$$T(x,y)=\left(x+\alpha^*\left(\frac{2}{3}-P(a_0|\omega_1)\right),y+\alpha^*\left(\frac{1}{3}-P(a_0|\omega_0)\right)\right)$$
for $\alpha^*> 0$. If this mapping has a fixed point, it means that the corresponding experiment together with the above strategies lead to $P(a_0|\omega_1)=2/3$ and $P(a_0|\omega_0)=1/3$, as in the Bayesian persuasion solution. It is straightforward to verify that this forms a wPBE.

A couple of modifications are needed to ensure that $T(x,y)$ has a fixed point. First, we need the mapping to be continuous. By allowing the sender to choose any mixture of optimal disclosures when she is indifferent, we obtain a correspondence satisfying the required continuity property. Second, in some cases, at least some types of the sender may prefer to deviate and run one experiment that fully reveals the state; the result of this fully revealing experiment is never disclosed to the receiver, but can increase the sender's utility because it enables her to tailor her disclosure choice according to the state. To accommodate this possibility, we simply allow for the option to run one fully revealing experiment as part of the sender's optimal strategy.

A final challenge is to ensure that $T$ maps binary experiments to binary experiments, i.e., that its image is contained in $[0,1]^2$. We prove that this is indeed the case when $\alpha^*$ is sufficiently small.
\end{example}

\section{Examples}

In this section, we analyze three examples. The first shows how the equilibrium experiment disclosed to the receiver can be qualitatively different from the one used in the Bayesian persuasion solution. The second shows that the conclusion of Theorem \ref{th:information} can fail if there is no punishing action. The third shows that the conclusion of Theorem \ref{th:information} can fail if there is a positive probability that the sender cannot run any experiments. Section \ref{sec:robustness} examines this last issue in detail for the case of transparent motives.

\begin{example}\label{ex:weird_example}
Let $\Omega=\{\omega_0,\omega_1\}$ with prior $p=1/2$. Suppose the sender's type is equal to $3$ with probability one. There are $4$ actions $A=\{a_1,a_2,a_3,a_4\}$. The sender's utility is $u(a_1,\omega)=u(a_4,\omega)=-1/2$ for each $\omega$, $u(a_2,\omega_0)=u(a_3,\omega_1)=1/2$, and $u(a_2,\omega_1)=u(a_3,\omega_0)=-1/2$. The sender's indirect utility is depicted in Figure \ref{fig:weird_example}.

\begin{figure}[h]\label{fig:weird_example}
\centering
\begin{tikzpicture}[x=4cm, y=4cm]

  \draw[->] (0,-0.6) -- (0,0.55) node[above left, yshift=4pt] {Sender's utility};
  \draw[->] (0,0) -- (1.1,0) node[below right] {Receiver's belief};

  \draw (0,0) -- (0,0.02);
  \node[below left] at (0,0) {$0$};
  \foreach \x/\lab in {0.333333/{\tfrac{1}{3}}, 0.5/{\tfrac{1}{2}}, 0.666667/{\tfrac{2}{3}}, 1/{1}}
    \draw (\x,0) -- (\x,0.02) node[below] {$\lab$};

  \draw (-0.02,0.5) -- (0.02,0.5) node[left=4pt] {$\tfrac{1}{2}$};
  \draw (-0.02,0.166667) -- (0.02,0.166667) node[left=4pt] {$\tfrac{1}{6}$};
  \draw (-0.02,-0.5) -- (0.02,-0.5) node[left=4pt] {$-\tfrac{1}{2}$};

  \draw[blue, very thick] (0,-0.5) -- (0.333333,-0.5);
  \filldraw[blue] (0,-0.5) circle (0.016);
  \filldraw[blue, fill=white, thick] (0.333333,-0.5) circle (0.016);

  \draw[blue, very thick] (0.333333,0.166667) -- (0.5,0) -- (0.666667,0.166667);
  \filldraw[blue] (0.333333,0.166667) circle (0.016);
  \filldraw[blue] (0.666667,0.166667) circle (0.016);

  \draw[blue, very thick] (0.666667,-0.5) -- (1,-0.5);
  \filldraw[blue, fill=white, thick] (0.666667,-0.5) circle (0.016);
  \filldraw[blue] (1,-0.5) circle (0.016);

  \draw[dashed] (0.333333,0.166667) -- (0.666667,0.166667); 
  \draw[dashed] (0.333333,0.166667) -- (0,-0.5);            
  \draw[dashed] (0.666667,0.166667) -- (1,-0.5);            

\end{tikzpicture}
\caption{The sender's indirect utility for Example \ref{ex:weird_example}. The dashed line depicts the concavification. The distribution of posteriors that implements the Bayesian persuasion policy is $\frac{1}{2}\delta_{\frac{1}{3}}+\frac{1}{2}\delta_{\frac{2}{3}}$.}
\end{figure}
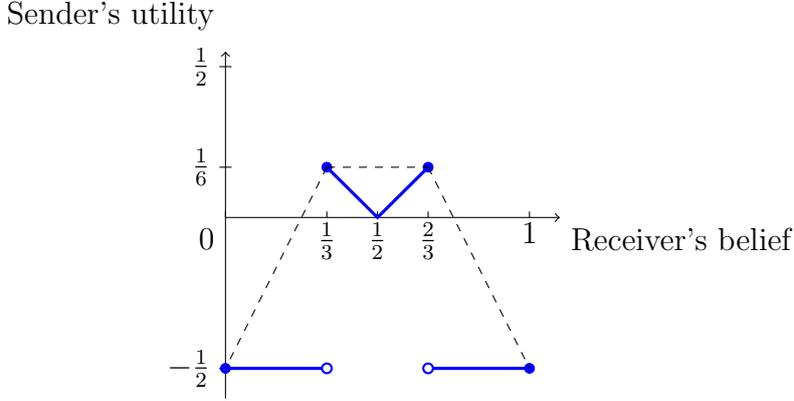
 
We first try to construct an equilibrium of the following form. The sender runs $3$ conditionally i.i.d.\ symmetric binary experiments $F:\Omega\to \{s_2,s_3\}$ and discloses $\{(F,s_i)\}$ if and only if signal $s_i$ occurs a majority of the time (i.e., either $2$ or $3$ times). After observing disclosure $\{(F,s_i)\}$, the receiver plays action $a_i$. These strategies result in the Bayesian persuasion solution if $F(\omega_1)[s_3]=F(\omega_0)[s_2]=q$ where $q\approx0.613$ is the solution of $q^3+{3 \choose 2}(q^2(1-q))=2/3$. For such an experiment, the sender's disclosure strategy is to reveal the majority signal and the receiver's strategy is a best reply to the sender's since $P_F(\omega_1 \mid \text{majority of } s_2)=1/3$ and $P_F(\omega_1 \mid \text{majority of } s_3)=2/3$.

We claim, however, that the sender has a profitable deviation. Consider instead using one of the three experiments to learn the state while running the other two experiments as above according to $F$. The sender then discloses $\{(F,s_3)\}$ in state $\omega_1$ if she has received at least one $s_3$ signal, and discloses $\{(F,s_2)\}$ in state $\omega_0$ if she has received at least one $s_2$ signal. This is indeed a profitable deviation since the conditional probability of playing action $a_3$ in state $\omega_1$ is at least $3/4$ (and similarly for $a_2$ in state $\omega_0$), which exceeds the corresponding probability of $2/3$ in the above strategy profile.

Given that the sender learns the state, in equilibrium, the receiver accounts for the likelihood that disclosure of $\{(F,s_3)\}$ means the sender learned the state is $\omega_1$. This effect tends to increase the receiver's belief following disclosure of $\{(F,s_3)\}$. How, then, can the sender's strategy ensure that this belief is not greater than $2/3$? Perhaps counterintuitively, the resolution involves reversing the direction of the signals so that $s_3$ is relatively \emph{more} likely in state $\omega_0$ than in state $\omega_1$. 

For the exact construction, take $F(\omega_1)[s_2]=F(\omega_0)[s_3]=\frac{1}{\sqrt{3}}\approx0.577$. The sender's strategy is as follows. She runs two experiments according to $F$ and one that fully reveals the state. She then discloses $\{(F,s_3)\}$ in state $\omega_1$ whenever doing so is feasible, and discloses $\{(F,s_2)\}$ in state $\omega_0$ whenever feasible. The receiver then plays action $a_3$ if and only if he observes signal $s_3$. Thus the receiver plays $a_3$ despite the fact that the disclosed signal, if taken at face value, suggests that the state is more likely to be $\omega_0$. 

To see that this is indeed a wPBE, we need to verify first that the posterior distribution generated by the sender's strategy is the desired one. Indeed, the probability of revealing signal $s_3$ at state $\omega_1$ is $1-(1/\sqrt{3})^2=2/3$, as desired, and similarly for $s_2$. The sender's disclosure strategy, given her information, maximizes her utility by construction. The only deviation that needs to be considered is that the sender could draw $3$ signals according to $F$ instead of learning the state and reveal the majority signal. Under this deviation, the probability of revealing signal $s_3$ in state $\omega_1$ is
$$\left(1-\frac{1}{\sqrt{3}}\right)^3+3\left(\frac{1}{\sqrt{3}}\right)^2\left(1-\frac{1}{\sqrt{3}}\right)<\frac{2}{3}.$$
Therefore, the sender has no profitable deviation.
\end{example}

\begin{example}\label{ex:no_punishing_action}
This example shows that if there is no punishing action, the conclusion of Theorem \ref{th:information} may not hold.

Let $\Omega=\{\omega_0,\omega_1\}$ and $A=\{a_0,a_1\}$. The prior, which we identify with the probability of state $\omega_1$, is $1/2$. Action $a_0$ is optimal for the receiver if and only if she assigns posterior probability at most $2/3$ to state $\omega_1$ and action $a_1$ is optimal if and only if she assigns  probability at least $2/3$ to state $\omega_1$. The utility of the sender is $\tilde v(a_0,\omega)=1$ for each $\omega\in\Omega$,  $\tilde v(a_1,\omega_1)=2/3 +\varepsilon$, and $\tilde v(a_1,\omega_0)=5/3 +\varepsilon$ for some $\varepsilon>0$. The indirect utility of the sender is depicted in Figure \ref{fig:no_punishing_action}. As is evident from the graph, the Bayesian persuasion policy induces a posterior distribution of $\frac{1}{4}\delta_0+\frac{3}{4}\delta_{\frac{2}{3}}$.

\begin{figure}[h]\label{fig:no_punishing_action}
\centering
\begin{tikzpicture}[x=4.5cm, y=3.5cm]

  \draw[->] (0,0) -- (0,{1+\eps+0.1}) node[above left] {Sender's utility};
  \draw[->] (0,0) -- (1.1,0) node[below right] {Receiver's belief};

  \draw (1,0) -- (1,0.03);
  \draw (0.666667,0) -- (0.666667,0.03);
  \draw (0.5,0) -- (0.5,0.03);

  \node[below] at (1,-0.03) {$1$};
  \node[below] at (0.666667,-0.03) {$\tfrac{2}{3}$};
  \node[below] at (0.5,-0.03) {$\tfrac{1}{2}$};

  \draw (-0.02,1) -- (0.02,1) node[left=4pt] {$1$};
  \draw (-0.02,{1+\eps}) -- (0.02,{1+\eps}) node[left=4pt] {$1+\varepsilon$};

  \draw[blue, very thick] (0,1) -- (0.666667,1);

  \filldraw[blue, fill=white, thick] (0.666667,1) circle (0.016);

  \draw[blue, very thick]
    (0.666667,{1+\eps}) -- (1,{0.666667+\eps});

  \filldraw[blue] (0.666667,{1+\eps}) circle (0.016);
  \filldraw[blue] (1,{0.666667+\eps}) circle (0.016);

  \draw[dashed] (0.666667,{1+\eps}) -- (0,1);

\end{tikzpicture}
\caption{The sender's indirect utility for Example \ref{ex:no_punishing_action}. The dashed line depicts the concavification.}
\end{figure}
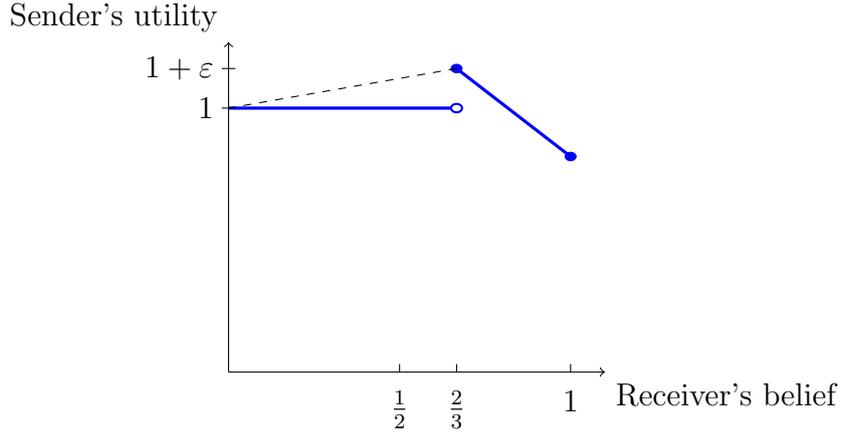

Suppose the distribution of sender types $k$ is given by $\nu_1=\nu_2=\frac{1}{2}$. We claim that, for sufficiently small $\varepsilon$, the Bayesian persuasion policy cannot be obtained in any equilibrium. We relegate the formal proof of this claim to the appendix and provide only some intuition here.

Suppose there was such an equilibrium. Then is must be that whenever the receiver chooses action $a_0$, his belief is equal to $0$. It follows that, in equilibrium, the sender's belief is also always equal to $0$ when the receiver chooses action $a_0$, and this belief for the sender must arise with probability at least $1/4$. 

Now consider the action played by the receiver if the sender discloses nothing. Since a sender with belief $0$ strictly prefers action $a_1$ to action $a_0$, and $a_0$ is played with positive probability when the sender has belief $0$, it must be that the receiver plays $a_0$ with probability one following no disclosure.

Notice that each type $k$ must induce the receiver to play $a_1$ with positive probability (in which case the receiver has belief $2/3$). It is without loss to assume that type $k=1$ runs a binary experiment $F$ generating a low signal leading to posterior $0$ and a high signal leading to a positive posterior. The optimal response for type $k=2$ is to run $F$ twice. But when $\varepsilon$ is small, if both experiments generate high signals, then the sender's belief exceeds $2/3+\varepsilon$, which implies that the sender strictly prefers action $a_0$ to action $a_1$. Since the sender can induce $a_0$ by not disclosing anything, it follows that $a_0$ is chosen after two high signals, contradicting that the sender's belief is $0$ whenever $a_0$ is chosen.
\end{example}

\begin{example}\label{ex:robustness}
We next ask whether our analysis is robust with respect to ignorance, i.e., whether the conclusion of Theorem \ref{th:information} holds if $\nu$ assigns some small probability to $k=0$. Note that when $k=0$, only the empty evidence $\varnothing$ is available for the sender to disclose. 

Let $\Omega=\{\omega_0,\omega_1\}$ with prior $1/2$. 
The receiver's action set is $A=\{a_0,a_1,a_2\}$. Action $a_0$ is optimal for the receiver for posteriors in $[0,1/3]$, action $a_1$ is optimal for posteriors in $[1/3,2/3]$, and action $a_2$ is optimal for posteriors in $[2/3,1]$. The sender's utility is $\tilde v(a_0,\omega_0)=0$, $\tilde v(a_0,\omega_1)=3/2$, $\tilde v(a_1,\omega_0)=\tilde v(a_1,\omega_1)=0$, $\tilde v(a_2,\omega_0)=3/2$, and $\tilde v(a_2,\omega_1)=0$. The sender's indirect utility is depicted in Figure \ref{fig:robustness_example}.

\begin{figure}\label{fig:robustness_example}
\centering
\begin{tikzpicture}[x=4.5cm, y=4cm]
\draw[->] (0,0) -- (0,0.6) node[above left] {Sender's utility};
  \draw[->] (0,0) -- (1.1,0) node[below right] {Receiver's belief};

  \foreach \x/\lab in {0/{0}, 0.333333/{\tfrac{1}{3}}, 0.5/{\tfrac{1}{2}}, 0.666667/{\tfrac{2}{3}}, 1/{1}}
    \draw (\x,0) -- (\x,0.02) node[below] {$\lab$};

  \draw (-0.02,0.5) -- (0.02,0.5) node[left=4pt] {$\tfrac{1}{2}$};

  \draw[blue, very thick] (0,0) -- (0.333333,0.5);          
  \draw[blue, very thick] (0.333333,0) -- (0.666667,0);      
  \draw[blue, very thick] (0.666667,0.5) -- (1,0);          

  \filldraw[blue, fill=white, thick] (0.333333,0) circle (0.012);
  \filldraw[blue, fill=white, thick] (0.666667,0) circle (0.012);

  \filldraw[blue] (0,0) circle (0.012);
  \filldraw[blue] (0.333333,0.5) circle (0.012);
  \filldraw[blue] (0.666667,0.5) circle (0.012);
  \filldraw[blue] (1,0) circle (0.012);

  \draw[dashed] (0.333333,0.5) -- (0.666667,0.5);

\end{tikzpicture}
\caption{The sender's indirect utility for Example \ref{ex:robustness}. The dashed line depicts the concavification.}
\end{figure}
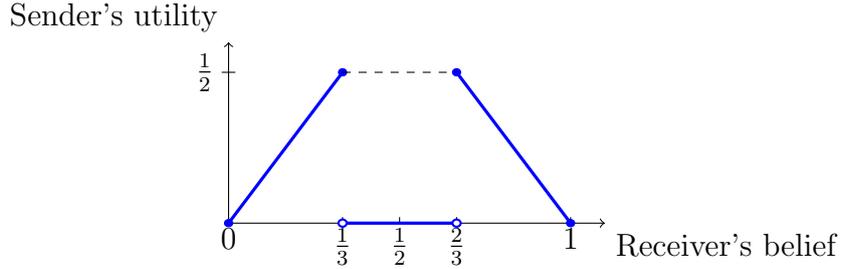

Let $\nu$ be any distribution of $k$ that contains $0$ in its support. Suppose for contradiction that there exists a wPBE implementing the Bayesian persuasion policy. Since the evidence $\varnothing$ is on the equilibrium path, the receiver's belief after observing $\varnothing$ must be either $1/3$ or $2/3$, and he must correspondingly play either action $a_0$ or $a_2$ with probability one. Assume without loss of generality that he plays action $a_0$. Note that whenever the sender's posterior belief lies above $1/2$, she strictly prefers  action $a_0$ to action $a_2$. Therefore, it must be that the probability of $\omega_1$ conditional on $a_2$ is at most $1/2$, for otherwise the sender would have a profitable deviation to disclosing $\varnothing$ instead of every disclosure that leads to $a_2$. This contradicts that the probability of $\omega_1$ conditional on $a_2$ in the Bayesian persuasion solution is $2/3$.
\end{example}

\section{Transparent Motives}\label{sec:robustness}

One case of particular interest is when the sender has transparent motives, which ensures the existence of a punishing action. In this section, we examine this case in more detail. As we show in the appendix, under the assumption of transparent motives, one can prove Theorem \ref{th:information} in a more constructive way without relying on a fixed point argument (beyond the Intermediate Value Theorem).

The transparent motives case also differs from the general case with respect to robustness to ignorance: as long as the posterior associated with the sender's least preferred action among those played in the Bayesian persuasion policy is interior, the equilibrium is robust to ignorance. This result is a consequence of Theorem \ref{thm:robustness} below.

Recall that the state space is $\Omega=\{\omega_1,\ldots,\omega_n\}$ and the prior is $p\in\Delta(\Omega)$. The distribution of types of the sender is $\nu\in\Delta(\mathbb{N})$ (in particular, the support of $\nu$ may now include $0$). The sender's utility from action $a\in A=\{a_1,\dots,a_m\}$ is $\tilde v (a)$. Order the actions so that $\tilde v(a_i)\leq \tilde v (a_j)$ for $i\leq j$. Let $P\in\Delta(A\times\Omega)$ be the Bayesian persuasion policy. Suppose without loss of generality that $P$ induces a belief distribution of the form 
$$\sum_{i=1}^m \alpha_i\delta_{q^i},$$ where
$q^i_j=P(\omega_j |a_i)$ and $\alpha_i=P(a_i\times\Omega)$. 

For any $q\in\Delta(\Omega)\setminus\{p\}$, let $\beta(q)$ be the maximal $\beta\in\mathbb{R}_+$ such that $p+\beta(q-p)\in\Delta(\Omega)$. Note that $\beta(q)\geq 1$ and $\beta(q)> 1$ if $q\in \mathrm{int}(\Delta(\Omega))$. Let $\alpha(q)=1-\frac{1}{\beta(q)}\in[0,1]$, where we define $\alpha(p)=1$. Note that
$\alpha(q)p+(1-\alpha(q))(p+\beta(q)(q-p))=q$. In addition, by construction, $\alpha(q)$ is the largest $\alpha\in[0,1]$ for which there exists $q_*\in\Delta(\Omega)$ such that $\alpha p+(1-\alpha)q_*=q$.

\begin{theorem}\label{thm:robustness}
There exists a wPBE in which the sender achieves the Bayesian persuasion utility if and only if $\nu_0\leq \alpha(q^1)\alpha_1$.
\end{theorem}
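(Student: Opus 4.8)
The plan is to prove both directions by examining the receiver's belief and action following the empty disclosure $\varnothing$, which every type of sender can always produce regardless of what she observes.

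For necessity, suppose a wPBE attains the Bayesian persuasion utility. Since the Bayesian persuasion policy is unique, the induced distribution of posteriors must equal $\sum_i\alpha_i\delta_{q^i}$, so every on-path disclosure carries a posterior in $\{q^1,\dots,q^m\}$; in particular this holds for $\varnothing$, whose probability is at least $\nu_0$. I would first show that the receiver plays $a_1$ after $\varnothing$, equivalently $\rho(\varnothing)=q^1$. The argument is the standard withholding one: because the sender can always suppress all evidence, the action following $\varnothing$ gives every type at least $\tilde v$ of that action; since $a_1$ is induced with probability $\alpha_1>0$ and is the sender's least preferred action, any type inducing $a_1$ would deviate to $\varnothing$ unless the $\varnothing$-action yields no more than $\tilde v(a_1)$, which (as $a_1$ is worst and the policy is unique) forces $\rho(\varnothing)=q^1$. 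Letting $\mu\geq\nu_0$ be the total probability of $\varnothing$, I would then decompose this event into the type-$0$ senders, who contribute the prior $p$ with mass $\nu_0$, and the types $k\geq1$ who withhold, contributing some $q_*\in\Delta(\Omega)$ with mass $\mu-\nu_0$, so that
$$q^1=\tfrac{\nu_0}{\mu}\,p+\bigl(1-\tfrac{\nu_0}{\mu}\bigr)q_*.$$
The maximality defining $\alpha(q^1)$ gives $\nu_0/\mu\leq\alpha(q^1)$, while $\varnothing$ always inducing $a_1$ (whose total mass is $\alpha_1$) gives $\mu\leq\alpha_1$; multiplying yields $\nu_0\leq\mu\,\alpha(q^1)\leq\alpha_1\alpha(q^1)$.

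For sufficiency, assume $\nu_0\leq\alpha(q^1)\alpha_1$ and set $\lambda=\nu_0/\alpha_1\leq\alpha(q^1)$. By the characterization of $\alpha(q^1)$, the belief $q_*:=(q^1-\lambda p)/(1-\lambda)$ lies in $\Delta(\Omega)$, and this is the only place the feasibility condition is used. I would then pass to the game conditioned on $k\geq1$, with renormalized type distribution $\tilde\nu_k=\nu_k/(1-\nu_0)$ and target distribution of posteriors
$$D=\frac{\alpha_1-\nu_0}{1-\nu_0}\,\delta_{q_*}+\sum_{i=2}^m\frac{\alpha_i}{1-\nu_0}\,\delta_{q^i},$$
with $q_*$ assigned to the worst action $a_1$ and $q^i$ to $a_i$. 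Using $q^1=\lambda p+(1-\lambda)q_*$ together with $\sum_i\alpha_i q^i=p$, a direct barycenter computation shows $D$ is Bayes-plausible for the prior $p$. The conceptual point is that this reduces the problem with ignorance to one without it: the type-$0$ senders supply exactly the mass of prior $p$ that, once recombined with the types $k\geq1$ who disclose nothing, pulls the empty-disclosure posterior from $q_*$ back to $q^1$.

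It remains to realize $D$ by an experiment, and here I would invoke the constructive transparent-motives argument underlying Theorem~\ref{th:information}: applied to the type distribution $\tilde\nu$ and target $D$, it yields an experiment $F$ with signals $s_1,\dots,s_m$ such that the reveal-the-best-observed-signal strategy generates $D$, the worst signal $s_1$ carrying posterior $q_*$. Relabeling disclosure of $s_1$ as the empty disclosure (the sender is indifferent since both lead to $a_1$), I would assemble the profile in which types $k\geq1$ run $F$ repeatedly and disclose their best signal (or $\varnothing$), type-$0$ senders disclose $\varnothing$, and the receiver plays $a_i$ after $s_i$, plays $a_1$ after $\varnothing$, and plays a punishing action after any other disclosure. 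The wPBE check is then routine: the $s_i$ posteriors are $q^i$ by construction, the $\varnothing$ posterior equals $q^1$ by the barycenter identity, each prescribed action is a best reply to its Bayesian persuasion posterior, and no sender deviation helps because the preference ordering of actions makes reveal-best optimal while $\varnothing$ delivers the worst action and off-path disclosures meet the punishing action. I expect the main obstacle to be the imported existence of $F$ realizing $D$—the order-statistic/fixed-point content of the proof of Theorem~\ref{th:information}—and verifying that off-path deviations (to other experiments or richer disclosures) are indeed deterred by the punishing action; the feasibility inequality enters solely to keep $q_*$ in the simplex, which is exactly the constraint necessity extracts, closing the equivalence.
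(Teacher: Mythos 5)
Your proposal is correct and follows essentially the same route as the paper's proof: sufficiency by splitting $q^1$ into the prior with weight $\nu_0/\alpha_1$ and a residual belief $q_*$, renormalizing the type distribution to $\nu_k/(1-\nu_0)$, and invoking the transparent-motives construction (Proposition \ref{prop}) with $\varnothing$ playing the role of the lowest signal; necessity by showing that $\varnothing$ must induce the worst action $a_1$ and decomposing the $\varnothing$-event into the type-$0$ mass at the prior $p$ and the withholding mass at some $q_*$, then applying the maximality property defining $\alpha(q^1)$. The only difference is cosmetic: where the paper asserts that the total probability of $\varnothing$ equals $\alpha_1$ exactly, you use only the inequality $\mu\leq\alpha_1$, which is all the argument needs (and is arguably more careful, since $a_1$ could in principle also be induced by nonempty disclosures).
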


As noted above, this result implies that if $q^1\in \mathrm{int}(\Delta(\Omega))$, then as long as the probability that $k=0$ is small enough, there is wPBE that implement the Bayesian persuasion solution.

The idea behind Theorem \ref{thm:robustness} is that, to prevent deviations by types $k\geq 1$ whose realized signals lead to action $a^1$, it must be that disclosing nothing also leads to action $a^1$. The posterior associated with $a^1$ must therefore result from a combination of the prior $p$ with weight in proportion $\nu_0$ and posteriors for types $k\geq 1$ with weights in proportion to their likelihood coupled with the likelihood that their experiments lead to realizations associated with $a^1$. The condition that $\nu_0\leq \alpha(q^1)\alpha_1$ ensures that the weight given to the prior is low enough for their to exist posteriors that, when combined with the prior, lead to the posterior associated with $a^1$.

\appendix

\section{Proof of Theorem \ref{th:information}}
Let $\tilde A=\{a_1,\ldots,a_m\}$ be the set of actions played with positive probability, where $m>1$ (for $m=1$, the result is trivial). Without loss of generality, we can assume that $m\leq n$. For any state $\omega_i\in\Omega$ and action $a_j\in\tilde A$, let $$Q^*_{ij}=P(a_j|\omega_i)$$  
be the conditional probability of a recommendation of action $a_j$ given state $\omega_i$.

Consider a single-agent decision problem $\Gamma_k(Q)$ for the sender. Let $M_{n\times m}$ denote the set of $n\times m$ matrices with real-valued entries. The decision problem is defined by a row-stochastic matrix $Q=\{q_{ij}\}_{1\leq i\leq n, 1\leq j\leq m}\in M_{n\times m}$, 
a natural number $k\geq 1$, and a set $S=\{s_1,\ldots,s_m\}$ of $m$ signals---one for each action $a_j\in\tilde A$. 

In the first stage, if $k>1$, the sender chooses between two levels of information acquisition: obtaining $k$ signals or $k-1$ signals. If $k=1$, the sender has no choice and must select $k=1$. After this choice, the state $\omega_i \in \Omega$ is realized according to the prior $p$. Conditional on the realized state, the signal-generating matrix $Q$ determines the distribution of signals, where $P_Q(s_j \mid \omega_i) = q_{ij}$.

If the sender chooses to acquire $k$ signals, then $k$ conditionally i.i.d.\ draws $s^1, \ldots, s^k$ are generated, with each signal $s_j$ occurring with probability $q_{ij}$ in state $\omega_i$. If the sender chooses $k-1$, then $k-1$ conditionally i.i.d.\  signal draws are generated in the same way; in this case, however, the sender also observes the realized state $\omega_i$ in addition to the $k-1$ signals.

In the second stage, after observing his information, the sender selects one of the signals she has received and reveals it. Her utility depends on the revealed signal: if she reveals $s^l$ such that $s^l = s_j$, her payoff is $\tilde v(a_j, \omega_i)$.

Formally, for $k>1$, a pure strategy $\psi \in \Psi_k$ for the sender in problem $\Gamma_k(Q)$ is an element of $\{k-1,k\} \times \Lambda_k$, where the first component corresponds to the sender’s choice of information acquisition, and $\lambda \in \Lambda_k$ specifies which signal to reveal based on the sender’s information. In particular, when the sender has $k$ signals $(s^1, \ldots, s^k)$, $\lambda$ maps this vector to a revealed signal $s \in \{s^1, \ldots, s^k\}$. When the sender has $k-1$ signals $(s^1, \ldots, s^{k-1})$ and additionally observes the state $\omega_i \in \Omega$, $\lambda$ maps $(s^1, \ldots, s^{k-1}, \omega_i)$ to a signal $s \in \{s^1, \ldots, s^{k-1}\}$. For $k=1$, the sender has no choice, as she possesses only one signal, which she must reveal.

We note that any strategy $\psi$ induces a probability distribution $P^\psi$ over $\Omega \times S$, where $P^\psi(s_j | \omega_i)$ denotes the conditional probability that the sender reveals signal $s_j$ in state $\omega_i$. This distribution gives rise to a row-stochastic matrix 
$Q^\psi = \{q_{ij}^\psi\}_{1 \leq i \leq n,\, 1 \leq j \leq m}$, where 
$q_{ij}^\psi = P^\psi(s_j \mid \omega_i)$.

We next establish the following lemma.

\begin{lemma}\label{lemma:convexity}
Let $\tilde\Psi_k$ be the set of all mixed strategies in $\Gamma_k(Q)$ that maximize the sender’s utility. Then the set
\[
C_{\Gamma_k}(Q) = \{\, Q^{\tilde\psi} : \tilde\psi \in \tilde\Psi_k \,\}
\]
is a convex and compact subset of $\mathbb{R}^{n \cdot m}$.
\end{lemma}
\begin{proof}
 Note that the set of pure strategies $\Psi_k$ in $\Gamma_k$ is finite. Let $\Psi'_k$ be the set of all pure utility-maximizing strategies. Clearly $\tilde\Psi_k$ is the set of all convex combinations of elements in $\Psi'_k$. Therefore  $\tilde\Psi_k$ is a convex and compact set in $\Delta(\Psi_k)$. The mapping $\tilde\psi\to Q^{\tilde\psi}$ is linear on $\Delta(\Psi_k)$ and therefore $C_{\Gamma_k}(Q)$ is compact and convex as a linear image of the compact convex set $\tilde\Psi_k$. 
\end{proof}

Now consider a compound decision problem $\Gamma_\nu(Q)$. In this problem, the value $k$ is drawn according to the distribution $\nu$ and the sender continues as in $\Gamma_k(Q)$. Let $K$ be the support of $\nu$. A pure strategy $\psi=(\psi_k)_{k\in K}$ in $\Gamma_\nu(Q)$ is just a concatenation of pure strategies $\psi_k$ in $\Gamma_k(Q)$ for every $k\in K$. 
Note that a strategy $\psi=(\psi_k)_{k\in K}$ in $\Gamma_\nu(Q)$ is utility maximizing if and only if each $\psi_k$ is utility maximizing in $\Gamma_k(Q)$. Therefore, extending the definition of $C_{\Gamma_k}(Q)$ to $C_{\Gamma_\nu}(Q)$ in the obvious way, we have
$$C_{\Gamma_\nu}(Q)=\sum_{k}\nu_kC_{\Gamma_k}(Q).$$
In particular, $C_{\Gamma_\nu}(Q)$ is convex and compact.

Let $\mathcal{Q}$ be the set of all row-stochastic matrices $Q\in M_{n\times m}$ such that $Q_{ij}=0$ whenever $Q^*_{ij}=0$. Note that $\mathcal{Q}$ is convex and compact.

For $\alpha>0$, define a correspondence $T_\alpha:\mathcal{Q}\to M_{n\times m}$ by
$$T_\alpha(Q)=\{Q+\alpha(Q^*-Q')\mid Q'\in C_{\Gamma_\nu}(Q)\}.$$
Note that, by Lemma \ref{lemma:convexity}, $T_\alpha$ is a convex and compact-valued correspondence.

\begin{lemma}\label{lem:range-Q}
 There exists $\alpha^*>0$ for which the range of $T_{\alpha^*}$ is contained in $\mathcal{Q}$.   
\end{lemma}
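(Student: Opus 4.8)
The plan is to reduce membership of $Q+\alpha(Q^*-Q')$ in $\mathcal{Q}$ to three requirements, checked entrywise and uniformly over $Q\in\mathcal{Q}$ and $Q'\in C_{\Gamma_\nu}(Q)$: that each row sums to one, that every entry is nonnegative, and that entries vanish exactly where $Q^*$ does. The first holds for every $\alpha$: since $Q$, $Q^*$, and $Q'$ are all row-stochastic, each row of $Q+\alpha(Q^*-Q')$ sums to $1+\alpha(1-1)=1$. Moreover, once nonnegativity is established, the bound $(Q+\alpha(Q^*-Q'))_{ij}\le 1$ comes for free, since a row of nonnegative entries summing to one has every entry in $[0,1]$. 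Thus the whole lemma comes down to controlling the sign of the perturbed entries and preserving the zero pattern, uniformly.

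The central estimate is that disclosure probabilities cannot grow too fast relative to the signal probabilities: for every $Q'\in C_{\Gamma_\nu}(Q)$,
$$q'_{ij}\ \le\ \bar k\, q_{ij},\qquad \bar k:=\sum_{k}k\,\nu_k.$$
To see this, note that the sender can reveal $s_j$ in state $\omega_i$ only if she has actually drawn at least one $s_j$, so the probability she reveals $s_j$ is at most the expected number of $s_j$ draws, which for a type drawing $k$ (or $k-1$) conditionally i.i.d.\ signals is at most $k q_{ij}$; averaging over types with weights $\nu_k$ yields the display. Now take $\alpha^*=1/\bar k$. For any $Q\in\mathcal{Q}$ and $Q'\in C_{\Gamma_\nu}(Q)$, using $Q^*_{ij}\ge 0$,
$$\big(Q+\alpha^*(Q^*-Q')\big)_{ij}\ \ge\ q_{ij}-\alpha^* q'_{ij}\ \ge\ q_{ij}\,(1-\alpha^*\bar k)\ =\ 0,$$
which gives nonnegativity. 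The zero pattern is preserved as well: if $Q^*_{ij}=0$ then $q_{ij}=0$ because $Q\in\mathcal{Q}$, whence $q'_{ij}\le\bar k q_{ij}=0$, so the perturbed entry equals $0$. Combining the three facts gives $T_{\alpha^*}(Q)\subseteq\mathcal{Q}$ for every $Q$, which is the claim.

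The delicate point, and the reason a single $\alpha^*$ must be chosen with care, is the behavior near the boundary of $\mathcal{Q}$: when $q_{ij}$ is small but positive, the subtracted term $\alpha q'_{ij}$ must not overwhelm $q_{ij}$. This is exactly what $q'_{ij}\le\bar k q_{ij}$ controls, turning the requirement into the uniform condition $q_{ij}(1-\alpha\bar k)\ge 0$, which binds only through the single constant $\bar k$. The one hypothesis the argument genuinely uses is $\bar k=\sum_k k\,\nu_k<\infty$; if $\nu$ had infinite mean the ratio $q'_{ij}/q_{ij}$ need not stay bounded (a reveal-if-received strategy can push $q'_{ij}$ to order $q_{ij}\log(1/q_{ij})$), so one would either restrict to $\nu$ with finite mean (as in all the examples, e.g.\ geometric or finitely supported types) or supply a separate argument exploiting optimality of $Q'$. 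I expect establishing the expected-count bound, and confirming that finite mean is available in the intended setting, to be the only real obstacle; the remaining steps follow immediately from row-stochasticity.
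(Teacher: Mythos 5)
Your reduction to three entrywise requirements (row sums, nonnegativity, zero pattern) is exactly right, and your union-bound estimate $q'_{ij}\le \bar k\, q_{ij}$ is valid whenever $\bar k=\sum_k k\nu_k<\infty$. But that finiteness assumption is a genuine gap, not a deferrable side issue: the model allows an arbitrary $\nu\in\Delta(\mathbb{N}_+)$, and the lemma must hold at that level of generality, so distributions such as $\nu_k\propto k^{-2}$ (finite mass, infinite mean) are in scope. For such $\nu$ your choice $\alpha^*=1/\bar k$ degenerates to $0$, and, as you yourself note, the ratio $q'_{ij}/q_{ij}$ is genuinely unbounded as $q_{ij}\to 0$ (it grows like $\log(1/q_{ij})$), so no bound of your linear form can exist. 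Neither of your proposed repairs matches the paper: the setting does not grant finite mean, and the paper's proof does not exploit optimality of $Q'$ at all---it uses only feasibility of the disclosure probabilities.

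What the paper does instead is replace your uniform slope bound with the exact feasibility bound $q'_{ij}\le \Pi(q_{ij})$, where $\Pi(q)=\mathbb{E}_\nu\bigl[1-(1-q)^k\bigr]$ is the probability of drawing at least one $s_j$; this is well defined and at most $1$ for every $\nu$, with no moment condition, and is continuous and strictly increasing with $\Pi(0)=0$, $\Pi(1)=1$. The key device is then a two-regime case analysis rather than a single estimate. Let $a_{\min}=\min\{Q^*_{ij}:Q^*_{ij}>0\}$, $q_0=\Pi^{-1}(a_{\min})>0$, and $\alpha^*=\min\{q_0/(1-a_{\min}),1\}$. If $0<q_{ij}\le q_0$, then $q'_{ij}\le\Pi(q_{ij})\le\Pi(q_0)=a_{\min}\le Q^*_{ij}$ (since $q_{ij}>0$ and $Q\in\mathcal{Q}$ force $Q^*_{ij}\ge a_{\min}$), so the perturbation $\alpha^*(Q^*_{ij}-q'_{ij})$ is nonnegative and there is nothing to control; if $q_{ij}>q_0$, the entry can fall by at most $\alpha^*(1-a_{\min})\le q_0<q_{ij}$, so it stays nonnegative; and if $q_{ij}=0$ then $q'_{ij}=0$, which also preserves the zero pattern when $Q^*_{ij}=0$. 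This threshold argument is what your proof is missing: near the boundary the perturbation automatically points inward, and away from the boundary a fixed small $\alpha^*$ suffices. As written, your proof establishes the lemma only under the additional hypothesis $\bar k<\infty$, which the statement does not provide.
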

\begin{proof}
For $q\in[0,1]$, let $\Pi(q)=\mathbb{E}_\nu\left[1-(1-q)^k\right]$; when $q=Q(s|\omega)$, since $1-(1-q)^k$ is the probability of getting at least one $s$ signal out of $k$ signals, $\Pi(q)$ is the maximal conditional probability of revealing a signal $s$ in state $\omega$. Note that $\Pi$ is continuous, strictly increasing, and satisfies $\Pi(0)=0$ and $\Pi(1)=1$.

Let
$a_{\min}=\min\{Q^*_{ij}: Q^*_{ij}>0\}\in(0,1],$
and
$q_0=\Pi^{-1}(a_{\min})\in(0,1].$ We claim that
$\alpha^*=\min\{\frac{q_0}{1-a_{\min}},1\}$ satisfies the desired properties. First note that $\alpha^*\in(0,1]$. 

Fixing any $Q\in\mathcal{Q}$ and any $Q'\in C_{\Gamma_\nu}(Q)$, let 
$$\tilde Q=Q+\alpha^*(Q^*-Q').$$
Consider a given $(i,j)$. Write
$$
q=Q_{ij},\quad a=Q^*_{ij},\quad q'=Q'_{ij},\quad\text{and}\quad \tilde q = q+\alpha^*(a-q').$$

We claim that $\tilde q$ is nonnegative. 
If $q=0$, then signal $s_j$ never appears in state $\omega_i$, and hence $q'=0$. Therefore, $\tilde q=0$. In particular, if $a=0$, then by the definition of $\mathcal{Q}$, we have $q=0$ and therefore $\tilde q =0$.

Now suppose $q\in (0,q_0]$. Recall that $q'$ represents the probability of revealing signal $s_j$ in state $\omega_i$. Since $\Pi(q)$ is the maximal probability that $s_j$ can be revealed given $q$, we have $q'\le \Pi(q)\le \Pi(q_0)=a_{\min}\le a$, which implies $a-q'\ge 0$ and thus $\tilde q\ge q\ge 0$.

If $q>q_0$, then
\[
\tilde q
= q-\alpha^*(q'-a)\ge q-\alpha^*(1-a_{\min}) 
\ge q_0-\alpha^*(1-a_{\min})\ge q_0-q_0=0,
\]
where the final inequality follows from the definition of $\alpha^*$. This completes the proof of the claim that $\tilde q$ is nonnegative.

The fact that $\sum_{j}\tilde Q_{ij}=1$ for every $i$ is immediate from the definition of $\tilde Q$ and the corresponding property of $Q$, $Q^*$, and $Q'$.

We have shown that for every $(i,j)$, $\tilde q_{ij}\ge 0$, the row sums of $\tilde Q$ equal one, and $\tilde Q_{ij}=0$ whenever $Q^*_{ij}=0$. Therefore, $\tilde Q\in\mathcal{Q}$.
\end{proof}

Next, we show that $T_{\alpha^*}$ has a fixed point.

\begin{lemma}
There exist $\alpha^*>0$ and $Q\in \mathcal{Q}$ such that $ Q\in T_{\alpha^*}(Q)$.
\end{lemma}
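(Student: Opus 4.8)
The plan is to apply Kakutani's fixed point theorem to the correspondence $T_{\alpha^*}$ on the domain $\mathcal{Q}$, where $\alpha^*$ is the constant produced by Lemma \ref{lem:range-Q}. Four hypotheses must be checked. The domain $\mathcal{Q}$ is nonempty (it contains $Q^*$, which is row-stochastic and trivially satisfies the support condition), convex, and compact in $M_{n\times m}\cong\mathbb{R}^{nm}$, as already noted. By Lemma \ref{lemma:convexity} and its extension to $\Gamma_\nu$, the set $C_{\Gamma_\nu}(Q)$ is nonempty, convex, and compact for each $Q$; since $T_{\alpha^*}(Q)$ is the image of $C_{\Gamma_\nu}(Q)$ under the affine map $Q'\mapsto Q+\alpha^*(Q^*-Q')$, each value of $T_{\alpha^*}$ is nonempty, convex, and compact as well. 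By Lemma \ref{lem:range-Q}, $T_{\alpha^*}$ maps $\mathcal{Q}$ into itself. It remains only to verify that $T_{\alpha^*}$ has closed graph.

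For this, I would first argue that $Q\mapsto C_{\Gamma_\nu}(Q)$ is upper hemicontinuous. The set of pure strategies $\Psi_\nu=\prod_{k\in K}\Psi_k$ is finite and does not depend on $Q$. For each pure $\psi$, the sender's expected utility $U(\psi,Q)$ is continuous (indeed polynomial) in the entries of $Q$, since the distribution of signal profiles is determined by conditionally i.i.d. draws whose state-conditional probabilities are the rows of $Q$, and the payoff is a finite sum over profiles and states. Extending by linearity, $U$ is continuous on the compact set $\Delta(\Psi_\nu)\times\mathcal{Q}$, so by Berge's maximum theorem the argmax correspondence $Q\mapsto\tilde\Psi_\nu(Q)$ is upper hemicontinuous with nonempty compact values.

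Next I would note that the induced-matrix map $(\tilde\psi,Q)\mapsto Q^{\tilde\psi}$ is jointly continuous, because each entry $Q^{\tilde\psi}_{ij}$ is, like $U$, a polynomial in the entries of $Q$ and linear in the mixing weights of $\tilde\psi$. Composing an upper hemicontinuous, compact-valued correspondence with a jointly continuous map preserves upper hemicontinuity, so $Q\mapsto C_{\Gamma_\nu}(Q)=\{Q^{\tilde\psi}:\tilde\psi\in\tilde\Psi_\nu(Q)\}$ is upper hemicontinuous and, being compact-valued, has closed graph. Applying the continuous affine transformation $Q'\mapsto Q+\alpha^*(Q^*-Q')$ preserves this property, so $T_{\alpha^*}$ has closed graph. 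All hypotheses of Kakutani's theorem are met, and $T_{\alpha^*}$ admits a fixed point $Q\in T_{\alpha^*}(Q)$ in $\mathcal{Q}$.

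The step I expect to be the main obstacle is establishing upper hemicontinuity of $Q\mapsto C_{\Gamma_\nu}(Q)$. The difficulty is that the sender's set of optimal disclosure strategies can jump discontinuously in $Q$: as $Q$ varies, the sender may pass in and out of indifference between revealing different signals, so the set of pure maximizers need not vary continuously. Upper hemicontinuity is precisely the one-sided property that survives such jumps—the limit of optimal strategies along a convergent sequence of problems remains optimal—and this is exactly what Berge's maximum theorem delivers once joint continuity of $U$ in $(\tilde\psi,Q)$ is in hand. This is also why the result is stated for the correspondence $T_{\alpha^*}$ rather than a single-valued map, and why allowing the sender to randomize over her optimal disclosures, so as to obtain convex values, is essential for Kakutani to apply.
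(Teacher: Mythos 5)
Your overall architecture is the same as the paper's: take $\alpha^*$ from Lemma \ref{lem:range-Q}, verify the Kakutani hypotheses (nonempty convex compact domain, convex compact values via Lemma \ref{lemma:convexity}, self-map via Lemma \ref{lem:range-Q}), and reduce everything to the closed-graph property. Your per-type continuity argument---finitely many pure strategies, payoff polynomial in the entries of $Q$, Berge's maximum theorem applied with a constant constraint set---is in fact a welcome fleshing-out of a step the paper merely asserts (``$C_{\Gamma_k}$ is a closed correspondence'').

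However, there is a genuine gap in how you handle the compound problem $\Gamma_\nu$. You claim that the pure strategy set $\Psi_\nu=\prod_{k\in K}\Psi_k$ ``is finite,'' but $K=\mathrm{supp}\,\nu$ need not be finite: the model allows any $\nu\in\Delta(\mathbb{N}_+)$, and Example \ref{ex:product_adoption} uses a geometric distribution, for which $K=\mathbb{N}_+$. When $K$ is infinite, $\Psi_\nu$ is an infinite product, $\Delta(\Psi_\nu)$ is not a finite-dimensional simplex, and your appeals to finiteness (compactness of the strategy space, continuity of $U$ as a finite sum, Berge on a finite-dimensional domain) do not apply as stated. The step can be repaired, but it requires an actual argument, and this is precisely where the paper's proof does work that yours omits: the paper decomposes $C_{\Gamma_\nu}(Q)=\sum_k \nu_k C_{\Gamma_k}(Q)$, applies the closedness of each $C_{\Gamma_k}$ (which your Berge argument establishes, since each $\Psi_k$ \emph{is} finite), and then uses a diagonal subsequence extraction together with the fact that all entries lie in $[0,1]$ (so the series $\sum_k\nu_k R^k_n$ converges uniformly in $n$) to pass to the limit term by term. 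Alternatively, you could salvage your one-shot approach by endowing $\prod_{k\in K}\Delta(\Psi_k)$ with the product topology, invoking Tychonoff for compactness, and proving continuity of $U(\psi,Q)=\sum_k\nu_k U_k(\psi_k,Q)$ from the uniform convergence of the tail sums (payoffs are bounded and $\sum_k\nu_k<\infty$), after which Berge's theorem for compact Hausdorff domains applies. Either patch is routine but necessary; as written, your proof is only valid when $\nu$ has finite support.
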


\begin{proof}
We use the Kakutani fixed point theorem. Let $\alpha^*$ be as in Lemma \ref{lem:range-Q}. Since $T_{\alpha^*}$ is convex and compact-valued and, by Lemma \ref{lem:range-Q}, has range contained in $\mathcal{Q}$, all that remains is to show that the graph of $T_{\alpha^*}$ is closed. 

Let $(Q_n)_n$ be a sequence in $\mathcal{Q}$ that converges to $Q$ and let $Z_n\in T_{\alpha}(Q_n)$ for each $n$ with $Z_n\to Z$. We must show that $Z\in T_\alpha(Q)$.

For each $n$, let $R_n\in C_{\Gamma_\nu}(Q_n)$ be such that
\[
Z_n = Q_n+\alpha^*\left(Q^*-R_n\right).
\]
Note that, by definition, $C_{\Gamma_\nu}(Q_n)=\sum_{k}\nu_k C_{\Gamma_k}(Q_n)$. 
Therefore, for each $n$ and each $k$ in the support of $\nu$, there exists $R_n^k\in C_{\Gamma_k}(Q_n)$ such that $R_n=\sum_k\nu_k R_n^k.$ 
Since $C_{\Gamma_k}(Q_n)\subset \mathcal Q$ and $\mathcal Q$ is compact, by using a diagonal argument, we can assume that for each $k$, the sequence $(R_n^k)_n$ converges to some $R^k_*$. In particular, we have  
$Z=Q+\alpha^*(Q^*-\sum_k\nu_k R_*^k)$.

Since $C_{\Gamma_k}$ is a closed correspondence we have $R^*_k\in C_{\Gamma_k}(Q)$. Hence 
$$Z\in Q+\alpha^*\left( Q^*-\sum_k \nu_k C_{\Gamma_k}(Q)\right).$$ 
Therefore, $Z\in T_\alpha(Q)$ as desired.
\end{proof}

As a corollary of the existence of a fixed point, we have
\begin{corollary}
There exists a matrix $Q\in\mathcal{Q}$ such that $Q^*\in C_{\Gamma_{\nu}}(Q)$.
\end{corollary}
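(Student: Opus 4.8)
The plan is to read the conclusion off directly from the fixed point established in the preceding lemma, since the corollary is essentially a restatement of the fixed-point property once the positive scalar $\alpha^*$ has been cancelled. First I would invoke that lemma to obtain $\alpha^*>0$ and a matrix $Q\in\mathcal{Q}$ satisfying $Q\in T_{\alpha^*}(Q)$.

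Next I would unwind the definition of the correspondence $T_{\alpha^*}$. By definition, the membership $Q\in T_{\alpha^*}(Q)$ means precisely that there exists some $Q'\in C_{\Gamma_\nu}(Q)$ for which $Q=Q+\alpha^*(Q^*-Q')$. Subtracting $Q$ from both sides yields the matrix identity $\alpha^*(Q^*-Q')=0$. Because $\alpha^*>0$ (as guaranteed already in Lemma~\ref{lem:range-Q}), this forces $Q^*-Q'=0$, i.e., $Q'=Q^*$.

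Finally, since $Q'$ was selected to lie in $C_{\Gamma_\nu}(Q)$ and we have just shown $Q'=Q^*$, I conclude that $Q^*\in C_{\Gamma_\nu}(Q)$, which is exactly the claim. I do not anticipate any genuine obstacle at this stage: all the substantive work---convexity and compactness of $C_{\Gamma_\nu}(Q)$ from Lemma~\ref{lemma:convexity}, the range containment ensuring $T_{\alpha^*}$ maps $\mathcal{Q}$ into itself, and the closed-graph property needed to apply Kakutani---has already been carried out in the lemmas leading up to the fixed-point statement. The corollary is the purely algebraic payoff of that machinery, relying only on the strict positivity of $\alpha^*$ to divide it out of the fixed-point equation.
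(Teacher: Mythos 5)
Your proof is correct and is essentially identical to the paper's own argument: both take the fixed point $Q \in T_{\alpha^*}(Q)$ from the preceding lemma, extract $Q' \in C_{\Gamma_\nu}(Q)$ with $Q = Q + \alpha^*(Q^* - Q')$, and cancel the strictly positive $\alpha^*$ to conclude $Q' = Q^*$. The paper merely compresses the algebra into ``the corollary follows since $\alpha^* > 0$,'' which you have simply written out in full.
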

\begin{proof}
 Let $Q$ be the fixed point of $T_{\alpha^*}$. By definition there exists $Q'\in C_{\Gamma_{\nu}}(Q)$ such that
 $$Q= Q+\alpha^*(Q^*-Q').$$
 The corollary follows since $\alpha^*>0$. 
  \end{proof}
 
 \begin{proof}[\textbf{Proof of Theorem \ref{th:information}}]
Let $Q\in\mathcal{Q}$ be such that $Q^*\in C_{\Gamma_{\nu}}(Q)$.
Note that any pure strategy $\psi=( \psi_k)_{k\in K}$ in $\Gamma_\nu(Q)$ induces a strategy for the sender in our evidence game as follows. Let $S=\{s_1,\ldots,s_m\}$ and $\pi:\Omega\to S$ be the experiment that induces $Q$; that is, let $\pi(\omega_i)[s_j]=q_{ij}$ for every $1\leq i\leq n$ and $1\leq j\leq m$. The strategy induced by $\psi$ is defined naturally as follows: for every realized $k$, if $\psi_k$ chooses to acquire $k$ signals, then the sender acquires $k$ experiments $\pi_1,\ldots,\pi_k$, all equal to $\pi$. After the $k$ signals $s^1,\ldots,s^k$ are realized, the sender reveals to the receiver a unique realization $(\pi,s)$ where the signal $s\in \{s^1,\ldots,s^k\}$ is chosen according to the corresponding $\lambda$. 
If $\psi_k$ acquires $k-1$ signals, then the sender chooses $k-1$ experiments $\pi_1,\ldots,\pi_{k-1}$, all equal to $\pi$, and uses the extra experiment to choose $\tilde \pi$, which is the full information experiment. That is, $\tilde S=\{t_1,\ldots,t_n\}$ and $\tilde \pi(\omega_i)[t_j]=\delta_{ij}$. After the $k-1$ signals $s^1,\ldots,s^{k-1}$ are realized and the state $\omega_i=t_i$ is revealed, the sender chooses to reveal the signal $s\in \{s^1,\ldots,s^{k-1}\}$ that is induced by $\lambda$ with respect to $(s^1,\ldots,s^{k-1},\omega_i)$. Note that this construction readily extends to mixed strategies $\tilde\psi$. Since $Q^*\in C_{\Gamma_{\nu}}(Q)$ by construction, there exists an optimal strategy $\tilde \psi=(\tilde \psi_k)_{k\in K}$ in $\Gamma_\nu(Q)$ such that $Q^{\tilde\psi}=Q^*$. 

Let the sender's strategy be the one that is induced by $\tilde\psi$. Define an assessment $(\rho,\sigma)$ for the receiver as follows.\footnote{Recall that an assessment $(\rho,\sigma)$ comprises a belief system $\rho$ together with a strategy $\sigma$.} Let $\{p^j\}_{j=1}^m\subset\Delta(\Omega)$ be the posteriors induced by the Bayesian persuasion policy, where $p^j=P(\cdot \mid a_j)$ is the conditional probability over $\Omega$ given that action $a_j$ is recommended by the sender. 
Define the receiver's assessment following a disclosure $\{(\pi,s)\}$ such that $s=s_j$ to be $\rho(\{(\pi,s_j)\})=p^j$ and $\sigma(\{(\pi,s_j)\})=\delta_{a_j}$. That is, if the receiver observes the experiment $\pi$ with a corresponding signal $s_j$, then his posterior belief is $p^j$ and he plays action $a_j$ with probability one. For any other $T\in\Delta^*$ let $\rho(T)=q$ and $\sigma(T)=x$, where $x\in\Delta(A)$ is the punishing action with respect to $\tilde A$, and $q\in\Delta(\Omega)$ is a posterior belief for which $x$ is optimal for the receiver.

We claim that the above profile forms wPBE. First consider the receiver. Let $\tilde P\in\Delta(\Omega\times S)$ be the probability distribution induced by the sender's strategy; namely, $\tilde P(\omega_i,s_j)$ is the probability that the realized state is $\omega_i$ and the revealed signal is $s_j$. 
By construction, $\tilde P(s_j\mid \omega_i)=Q^*_{ij}$. Since the marginal of $\tilde P$ on $\Omega$ is the prior $p$, it must be that $\tilde P$ induces the same probability distribution as the Bayesian persuasion policy $P\in\Delta(\Omega\times A)$ in the sense that $\tilde P(\omega_i,s_j)=P(\omega_i,a_j)$. Therefore, the belief described by $\rho$ is the correct belief on the equilibrium path. Moreover, $\sigma$ is optimal given $\rho$ by construction.

We claim that the sender's strategy is optimal given the receiver's strategy. It is enough to show that it is optimal conditional on every realized $k\in K$. Note that, by construction, the sender's strategy is optimal subject to the constraint that the only available experiments to her are the experiment $\pi$ defined above and the full information experiment $\tilde \pi$. By the construction of the receiver's strategy and the definition of a punishing action, revealing evidence that differs from $\{(\pi,s)\}$ is never a profitable deviation. Thus, the only possible deviation that needs to be considered is that one or more of the $k$ experiments is conducted according to $\pi'\neq\pi$ but is revealed to the receiver with probability zero. Therefore, the sender only uses the other experiment to acquire further information about the state. Replacing any such experiments with $\tilde\pi$ makes the sender better informed and thus can only improve her utility.  Therefore, the sender's strategy maximizes her expected utility given the receiver's strategy.
\end{proof}

\section{Proof for Example \ref{ex:no_punishing_action}}

\begin{proposition}
There exists $\varepsilon>0$ for which there is no wPBE implementing the Bayesian persuasion policy in Example \ref{ex:no_punishing_action}.
\end{proposition}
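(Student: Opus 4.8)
The plan is to argue by contradiction: fix $\varepsilon>0$, suppose some wPBE implements the Bayesian persuasion policy, and show this is impossible once $\varepsilon$ is small. The first step is to extract the forced on-path structure. Since the receiver's posterior distribution must equal $\tfrac14\delta_0+\tfrac34\delta_{2/3}$, every on-path disclosure induces belief $0$ (with action $a_0$) or belief $2/3$ (with action $a_1$), and $a_0$ is played only in state $\omega_0$; hence in state $\omega_1$ every type induces $a_1$ with probability one. Because the event that $a_0$ is triggered is determined by the sender's signals and is disjoint from $\omega_1$, the sender's own posterior is $0$ whenever she triggers $a_0$, an event of probability $1/4$. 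Such a sender strictly prefers $a_1$ (payoff $5/3+\varepsilon$) to $a_0$ (payoff $1$) and can guarantee $a_0$ by disclosing nothing, so no-disclosure must itself lead to $a_0$; consequently the sender triggers $a_1$ only when her posterior is at most $2/3+\varepsilon$. In particular, in state $\omega_1$ her posterior is at most $2/3+\varepsilon$ at every realization. Finally, writing $\theta_k$ for the probability that type $k$ induces $a_1$ in state $\omega_0$, a short accounting from $\Pr(a_1)=3/4$ gives $\theta_1+\theta_2=1$ and that each type induces $a_1$ with probability at least $1/2$.

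The second step pins down type $1$ and produces a profitable deviation for type $2$. Type $1$ discloses her single signal, so no signal of her experiment can carry a posterior strictly between $0$ and $2/3$ (it would occur with positive probability in $\omega_1$ yet trigger $a_0$ at a positive belief, contradicting the structure above). Her experiment is therefore effectively a binary $F$ with a low signal of posterior $0$ (so $F(\omega_1)[\text{low}]=0$) and a high signal $H$ of posterior $2/3$; Bayesian consistency at $(F,H)$ forces $F(\omega_0)[H]=1/2$. Now consider the deviation in which type $2$ runs $F$ twice. In state $\omega_1$ she obtains $(H,H)$ with probability one, giving posterior $4/5$, which exceeds $2/3+\varepsilon$ once $\varepsilon<2/15$, so she strictly prefers $a_0$ and discloses nothing; in state $\omega_0$ she also triggers $a_0$ after $(H,H)$, but after drawing a low signal she learns the state is $\omega_0$, discloses the accompanying $H$, and obtains $a_1$ with payoff $5/3+\varepsilon$. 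This deviation yields $7/6+\varepsilon/4$, whereas type $2$'s equilibrium payoff under this structure is $1+3\varepsilon/4$, so for all sufficiently small $\varepsilon$ the deviation is strictly profitable, contradicting equilibrium.

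I expect the main obstacle to be that the clean computation above assumes a canonical form—that type $1$ runs a single binary experiment whose high signal is not pooled with disclosures of type $2$—which is exactly what forces $F(\omega_0)[H]=1/2$ and $\theta_2=1/2$. In general the two types may share experiment labels, and type $1$ may separate the states more sharply (making its high signal rare in $\omega_0$ and its unpooled posterior close to one, so that consistency requires pooling with type $2$'s $\omega_0$-disclosures). In that regime the deviation to running $F$ twice need not improve on type $2$'s equilibrium payoff, and one must argue more carefully. I would close this gap either by a direct case analysis on $\theta_1$, or, more cleanly, by a compactness argument: if the policy were implementable along a sequence $\varepsilon_n\downarrow 0$, pass to a limit of the equilibrium objects to obtain a limiting profile in which the sender's posterior never exceeds $2/3$ in state $\omega_1$, the receiver's posterior at each $a_1$-disclosure is exactly $2/3$, and belief $0$ is reached with probability $1/2$ in state $\omega_0$. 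The crux is then to show these requirements are incompatible: since a sender holding two signals hides a state-revealing low signal and over-discloses high signals in state $\omega_0$ (where she strictly prefers $a_1$), restoring the receiver's posterior to $2/3$ requires a high signal informative enough that the sender's own posterior, at her most $\omega_1$-favorable realization in state $\omega_1$, is pushed strictly above $2/3$—the phenomenon that, in the canonical case, makes two high signals decisive.
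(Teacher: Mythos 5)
Your first paragraph and your choice of deviation both match the paper's proof: the structural facts (on-path receiver beliefs are $0$ or $2/3$, the sender's own posterior is $0$ whenever $a_0$ is played, no disclosure must lead to $a_0$ with probability one, and the sender induces $a_1$ only when her posterior is at most $2/3+\varepsilon$) are exactly the paper's opening steps, and the ``rescue'' deviation for type $2$ (run type $1$'s experiment twice and use a state-revealing low signal to justify disclosing an accompanying high signal) is precisely the deviation the paper exploits. The genuine gap is your middle step: the claim that, for fixed $\varepsilon>0$, type $1$'s experiment must be exactly binary with posteriors $\{0,2/3\}$. Your parenthetical justification is wrong: a type-$1$ signal with posterior $q\in(0,2/3)$ would not ``trigger $a_0$ at a positive belief''---by your own first paragraph it must trigger $a_1$---and there is no contradiction in that, because the receiver's belief at a disclosure is the \emph{pooled} expectation of the sender's posterior over all types and realizations making that disclosure. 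Type $2$ may disclose the same pair $(F,H)$ with posteriors in $(2/3,2/3+\varepsilon]$, so Bayesian consistency at $(F,H)$ does not force type $1$'s posterior there to equal $2/3$, hence neither $F(\omega_0)[H]=1/2$, nor the $(H,H)$ posterior $4/5$, nor the comparison $7/6+\varepsilon/4$ versus $1+3\varepsilon/4$ follows. You acknowledge this yourself, but neither of your proposed repairs is carried out, and the ``crux'' of your compactness route is exactly the statement left unproved.

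The paper closes this gap with a payoff argument that never pins down beliefs at individual disclosures, thereby sidestepping pooling entirely. Decompose type $1$'s posterior distribution as $\alpha_\varepsilon\delta_0+(1-\alpha_\varepsilon)F_\varepsilon$ with $F_\varepsilon$ supported in $(0,2/3+\varepsilon]$ (the cap from your first paragraph). Since every positive posterior induces $a_1$, type $1$'s payoff equals $\alpha_\varepsilon+(1-\alpha_\varepsilon)\left(\tfrac{5}{3}+\varepsilon-\mathbb{E}(F_\varepsilon)\right)$, and the martingale identity $(1-\alpha_\varepsilon)\mathbb{E}(F_\varepsilon)=\tfrac{1}{2}$ makes this equal to the Bayesian persuasion value plus $\left(\tfrac{1}{4}-\alpha_\varepsilon\right)\left(\tfrac{2}{3}+\varepsilon\right)$. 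Because type $2$ can mimic type $1$, type $1$'s payoff is at most the ex ante value, which equals the Bayesian persuasion value in the hypothesized equilibrium; hence $\alpha_\varepsilon\geq\tfrac{1}{4}$, while the cap and the martingale identity give $\alpha_\varepsilon\leq\tfrac{1}{4}+O(\varepsilon)$, so $\mathbb{E}(F_\varepsilon)\in[\tfrac{2}{3},\tfrac{2}{3}+\varepsilon]$ and $F_\varepsilon$ must concentrate on $[\tfrac{2}{3}-\delta,\tfrac{2}{3}+\varepsilon]$. In other words, your ``canonical form'' is recovered, but only approximately and only as $\varepsilon\to 0$---and that is all the rescue deviation needs: its probability is bounded below by some $\eta>0$ independent of $\varepsilon$, it gains roughly $\tfrac{2}{3}$ on that event, and it loses nothing elsewhere (disclose $\varnothing$ when both signals are high), so type $2$'s equilibrium payoff exceeds the Bayesian persuasion value by a constant while type $1$'s is at least $1$; averaging then beats the Bayesian persuasion value $1+\tfrac{3}{4}\varepsilon$ once $\varepsilon$ is small, the desired contradiction. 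If you want to repair your write-up, replace the exact characterization of $F$ by this asymptotic one; your fallback of a case analysis on $\theta_1$ at fixed $\varepsilon$ cannot work, because for the same pooling reason $\theta_1$ alone does not determine type $1$'s experiment.
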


\begin{proof}
Assume by way of contradiction that such a wPBE $(G,\phi)$ exists for every $\varepsilon>0$. Note that on the equilibrium path the receiver can have only two posterior beliefs about state $\omega_1$: $0$ and $2/3$. Note also that, on the equilibrium path, conditional on the receiver assigning probability $0$ to state $\omega_1$, the sender does as well. In particular, there is a positive probability that, after observing the realized signals, the sender assigns probability $0$ to state $\omega_1$.

Let $\varnothing$ denote the empty evidence. We claim that $\sigma(\varnothing)$ is $a_0$ with probability one. Note that whenever the sender’s belief about state $\omega_1$ is $0$, she strictly prefers action $1$ to action $0$. In that case, if $\sigma(\varnothing)$ is not deterministically $a_0$, the sender has a profitable deviation to $\varnothing$.

We claim that the posterior belief of the sender can never lie strictly above $2/3+\varepsilon$ in equilibrium. To see this, note that above this value, the utility of action $a_1$ lies below $1$. Since the sender's belief is not $0$, it must be that, on the equilibrium path, what she reveals induces the receiver to have belief $2/3$ and to choose action $a_1$. But then the sender prefers to deviate to disclosing $\varnothing$.
Revealing a signal on the equilibrium path that induces the receiver to have belief $2/3$ leads to action $a_1$ by the receiver. In this case, the sender can deviate and play $\varnothing$.

Next, we claim that the posterior distribution induced by the single experiment type $k=1$ uses takes the form $\alpha_\varepsilon\delta_0+(1-\alpha_\varepsilon)F_\varepsilon$, where $\alpha_\varepsilon$ converges to $1/4$ and $F_\varepsilon\in\Delta((0,1])$ converges in probability to $2/3$ as $\varepsilon\to 0$. Without loss of generality, any experiment of type $k=1$ generates a posterior distribution $\alpha_\varepsilon\delta_0+(1-\alpha_\varepsilon)F_\varepsilon$ (allowing for $\alpha_\varepsilon$ to be $0$). Since the support of the posterior distribution for $k=1$ has supremum no greater than $2/3+\varepsilon$ and has an expectation of $1/2$, $\alpha_\varepsilon$ can be at most $1/4$ as $\varepsilon\to0$.

Suppose $\alpha_\varepsilon$ lies strictly below $1/4-\delta$ for some $\delta>0$ and arbitrarily small $\varepsilon>0$. Then, since $(1-\alpha_\varepsilon)\mathbb{E}(F_\varepsilon)=1/2$, $F_\varepsilon$ must have an expectation $\mathbb{E}(F_\varepsilon)$ that lies strictly below $2/3-\eta$ for some $\eta>0$. Note that conditional on any posterior $q>0$ for type $k=1$, the receiver must play action $a_1$. Therefore, the expected payoff for type $k=1$ is 
$$\alpha_\varepsilon+(1-\alpha_\varepsilon)\left(-\mathbb{E}(F_\varepsilon)+\frac{5}{3}+\varepsilon\right).$$
This expression lies strictly above 
$$\frac{1}{4}+\frac{3}{4}\left(-\left(\frac{2}{3}-\eta\right)+\frac{5}{3}+\varepsilon\right),$$
which in turn is strictly larger than the Bayesian persuasion expected payoff of
$$\frac{1}{4}+\frac{3}{4}\left(-\frac{2}{3}+\frac{5}{3}+\varepsilon\right).$$
Since the expected utility of type $k=2$ must be at least as large as that of type $k=1$, it follows that the sender’s ex ante expected utility exceeds that under Bayesian persuasion, which is impossible. Therefore, $\alpha_\varepsilon \to 0$ as $\varepsilon\to 0$.

Finally, by similar reasoning, for every $\delta>0$, the probability that the experiment of type $k=1$ leads to a posterior that lies in the interval $[2/3-\delta,2/3+\varepsilon]$ approaches $3/4$ as $\varepsilon\to0$. Otherwise, the probability that the posterior lies in $(0,2/3-\delta]$ is at least some $\eta>0$ for arbitrarily small $\varepsilon>0$. In this case, the expected utility of type $k=1$ is at least  
$$\frac{1}{4}+\eta\left(-\left(\frac{2}{3}-\delta\right)+\frac{5}{3}+\varepsilon\right)+\left(\frac{3}{4}-\eta\right)\left(-\left(\frac{2}{3}+\varepsilon\right)+\frac{5}{3}+\varepsilon\right).$$
Again, this expected payoff lies strictly above the Bayesian persuasion expected payoff for sufficiently small $\varepsilon$, and we have a contradiction. 

The preceding argument implies that the experiment chosen by type $k=1$ generates a posterior distribution that lies arbitrarily close to the Bayesian persuasion distribution as $\varepsilon\to 0$, and gives an expected utility for the sender arbitrarily close to the Bayesian persuasion utility.

Now consider type $k=2$. We claim that there exists $\delta>0$ such that the expected utility of type $k=2$ exceeds the Bayesian persuasion utility by at least $\delta$ when $\varepsilon>0$ is sufficiently small. To see this, consider the following strategy for type $k=2$: choose the same experiment as type $k=1$ twice. If the realized signals correspond to a posterior of $0$ together with another posterior $q\in (0, 2/3+\varepsilon]$, reveal the signal that led to $q$ together with its experiment. Otherwise, reveal either signal realization together with its experiment. Note that the event of drawing both signals corresponding to $0$ and $q$ in the two experiments has a probability bounded below by some $\eta>0$. Using this strategy guarantees an expected utility of at least
$$\left(\frac{1}{4}-\eta\right)+\eta\left(\frac{5}{3}+\varepsilon\right)+\frac{3}{4}\left(-\left(\frac{2}{3}+\varepsilon\right)+\frac{5}{3}+\varepsilon\right),$$
which is strictly greater than the Bayesian persuasion utility when $\varepsilon$ is sufficiently small. Therefore, the sender’s ex ante expected utility is strictly greater than the Bayesian persuasion utility, a contradiction.
\end{proof}

\section{Proof of Theorem \ref{thm:robustness}}

In order to prove Theorem \ref{thm:robustness}, we allow the number of samples $k$ to depend on the realized state as well. Thus, in this case, we have $\nu\in\Delta(\Omega\times\mathbb{N})$; we write $\nu_\omega^k$ for $\nu(k\mid\omega)$. Given an experiment $\pi\in\Pi$, let $\mathbb{P}_{\pi,\nu}\in\Delta\left(\Omega\times\left(\cup_{k=1}^\infty S^k \right)\right)$ be the distribution that is generated by $\pi$ and $\nu$ when $k$ is chosen according to $\nu$ given $\omega$ and thereafter $k$ conditionally i.i.d.\ signals are drawn according to $\pi$. 

\begin{lemma}\label{lem:base}
 Consider a distribution $\gamma\delta_{q^L}+(1-\gamma)\delta_{q^H}\in\Delta(\Delta(\Omega))$ with a barycenter equal to the prior: $\gamma q^L+(1-\gamma)q^H=p$. There exists a binary signal $\pi\in\Delta(\Omega\times S')$, where $S'=\{s_H,s_L\}$, such generating the signal $k$ times conditionally i.i.d., where $k$ is determined according to $\nu\in \Delta(\Omega\times\mathbb N)$, gives rise to signals $s^1,\ldots,s^k$ such that $\mathbb{P}_{\pi,\nu}(\cdot \mid s^1=\cdots=s^k=s_L)=q^L$ and $\mathbb{P}_{\pi,\nu}(s^1=\cdots=s^k=s_L)=\gamma$. 
\end{lemma}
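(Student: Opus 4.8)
The plan is to turn the construction of the binary signal into a collection of one-dimensional root-finding problems, one per state, each solved by the Intermediate Value Theorem. Parametrize any candidate signal by the numbers $x_\omega := \pi(\omega)[s_L]\in[0,1]$, $\omega\in\Omega$, and set $\pi(\omega)[s_H]=1-x_\omega$. For a fixed state $\omega$, conditional on $\omega$ the number of draws is $k\sim\nu(\cdot\mid\omega)$ and each draw is $s_L$ with probability $x_\omega$, so the conditional probability that \emph{all} realized signals equal $s_L$ is
\[
\phi_\omega(x_\omega):=\sum_{k=0}^{\infty}\nu_\omega^k\,x_\omega^{\,k},
\]
the probability generating function of $\nu(\cdot\mid\omega)$. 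Here the $k=0$ term contributes $\nu_\omega^0$, since $\{s^1=\cdots=s^k=s_L\}$ holds vacuously when no signal is drawn. Hence the joint probability of the event ``all signals equal $s_L$'' together with state $\omega$ is exactly $p_\omega\,\phi_\omega(x_\omega)$.

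First I would note that the two conclusions of the lemma are jointly equivalent to the single system $p_\omega\,\phi_\omega(x_\omega)=\gamma\,q^L_\omega$ for every $\omega\in\Omega$: summing over $\omega$ and using $\sum_\omega q^L_\omega=1$ gives $\mathbb{P}_{\pi,\nu}(s^1=\cdots=s^k=s_L)=\gamma$, and dividing the $\omega$-equation by this total yields the posterior $\mathbb{P}_{\pi,\nu}(\omega\mid s^1=\cdots=s^k=s_L)=q^L_\omega$. It therefore suffices to solve, coordinate by coordinate, the scalar equation $\phi_\omega(x_\omega)=\gamma q^L_\omega/p_\omega$. Each $\phi_\omega$ is continuous and nondecreasing on $[0,1]$ with $\phi_\omega(0)=\nu_\omega^0$ and $\phi_\omega(1)=\sum_k\nu_\omega^k=1$ (and strictly increasing once $\nu(\cdot\mid\omega)$ places positive mass on some $k\geq1$), so by the Intermediate Value Theorem a solution $x_\omega\in[0,1]$ exists exactly when the target $\gamma q^L_\omega/p_\omega$ lies in $[\nu_\omega^0,1]$. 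The upper bound is automatic from the barycenter hypothesis: since $(1-\gamma)q^H_\omega\geq0$ and $\gamma q^L_\omega+(1-\gamma)q^H_\omega=p_\omega$, we get $\gamma q^L_\omega\leq p_\omega$, i.e.\ $\gamma q^L_\omega/p_\omega\leq 1$.

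The main obstacle, and the point where the quantitative content enters, is the lower bound $\nu_\omega^0\leq \gamma q^L_\omega/p_\omega$, equivalently $p_\omega\nu_\omega^0\leq \gamma q^L_\omega$. This is precisely the statement that the floor $\phi_\omega(0)=\nu_\omega^0$ forced by the no-experiment type does not exceed the target mass to be placed on $\omega$ within the event ``all $s_L$'': if this fails for some $\omega$, then $\phi_\omega$ never reaches $\gamma q^L_\omega/p_\omega$ and no binary signal can deliver posterior $q^L$ with total weight $\gamma$. When $\nu_0=0$ the condition is vacuous, and in the application to Theorem \ref{thm:robustness} (where $q^L=q^1$ and $\gamma$ is the mass attached to it) it is guaranteed, after the state-dependent choice of $\nu$, by the hypothesis $\nu_0\leq\alpha(q^1)\alpha_1$. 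I would therefore carry out the IVT step under this feasibility condition, and flag that checking it is exactly what ties the lemma to the threshold in the theorem.
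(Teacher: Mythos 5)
Your proof is correct, and its core is the same as the paper's: both arguments reduce the two conclusions to the per-state system $p_\omega\sum_k\nu_\omega^k x_\omega^k=\gamma q_\omega^L$ (where $x_\omega$ is the conditional probability of $s_L$ given $\omega$) and then solve each scalar equation by continuity and monotonicity, with the barycenter hypothesis supplying the upper bound $\gamma q^L_\omega\le p_\omega$; your parametrization by $x_\omega=\pi(\omega)[s_L]$ rather than by the pair $(\alpha,p^L)$ is cosmetic. The one genuine difference is the treatment of $k=0$. The paper's sums run over $k\ge 1$ only, so its proof implicitly establishes the lemma for $\nu$ placing no conditional mass on $k=0$; that is the only case it ever uses, since Proposition \ref{prop} invokes the lemma for a distribution conditioned on an event that forces $k\ge 1$, and Theorem \ref{thm:robustness} invokes it for $\nu'$ with $\nu'_0=0$, handling the ignorant type outside the lemma by pooling it into the $\varnothing$ disclosure. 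You instead keep the $k=0$ term inside the generating function $\phi_\omega$, which is exactly why you need the extra feasibility condition $\nu_\omega^0 p_\omega\le\gamma q^L_\omega$; you are right that, read literally with $0$ in the support of $\nu$, the statement fails without some such condition, so your caveat is a sharpening rather than a gap. Your closing observation is also correct and is the payoff of your convention: since $\alpha(q)=\min_\omega q_\omega/p_\omega$, your condition with $q^L=q^1$, $\gamma=\alpha_1$, and state-independent $\nu_0$ reads exactly $\nu_0\le\alpha(q^1)\alpha_1$, so your version of the lemma (combined with the induction of Proposition \ref{prop}) delivers the sufficiency half of Theorem \ref{thm:robustness} in one step, whereas the paper reaches it by first splitting $q^1$ into the mixture $\delta p+(1-\delta)q_*$ and applying the lemma to the modified target $q_*$ under $\nu'$.
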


\begin{proof}
Consider a binary signal on $S'$ that generates a posterior distribution $\alpha \delta_{p^L}+(1-\alpha)\delta_{p^H}\in\Delta(\Delta(\Omega))$, where $p^L$ is the posterior distribution given $s_L$ and $p^H$ is the posterior distribution given $s_H$. We claim that such a binary signal has the desired property if and only if
\begin{equation}\label{eq:prob_L}
\sum_{\omega\in\Omega}\sum_{k=1}^n \nu^k_{\omega}p_\omega\left(\frac{\alpha p^L_\omega}{p_\omega}\right)^k=\gamma
\end{equation}
and
\begin{equation}\label{eq:prob_conditional_L}
    \frac{p_\omega\sum_{k=1}^n \nu^k_{\omega}\left(\frac{\alpha p^L_\omega}{p_\omega}\right)^k}{\sum_{\omega'\in\Omega}p_{\omega'}\sum_{k=1}^n \nu^k_{\omega'}\left(\frac{\alpha p^L_{\omega'}}{p_{\omega'}}\right)^k}=q^L_\omega
\end{equation}
for every $\omega\in\Omega$. 
To understand these two conditions, note that, by Bayes' rule, the conditional probability of signal $s_L$ given state $\omega$ is $\alpha p^L_\omega/p_\omega$. Thus, \eqref{eq:prob_L} states that the total probability of receiving only $s_L$ signals is $\gamma$, and \eqref{eq:prob_conditional_L} states that the conditional probability of state $\omega$ given the event that $s^i=s_L$ for all $i=1,\dots,k$ is $q^L_\omega$, as desired.

To see that \eqref{eq:prob_L} and \eqref{eq:prob_conditional_L} have a solution, note that they are equivalent to requiring that for every $\omega\in\Omega$, 
\begin{equation}\label{eq:low signal}  p_\omega\sum_{k=1}^n \nu^k_{\omega}\left(\frac{\alpha p^L_\omega}{p_\omega}\right)^k=\gamma q^L_\omega.
\end{equation}
Note that, for every $\omega$, the left-hand side of \eqref{eq:low signal} 
is strictly increasing as a function of $x_\omega=\alpha p^L_\omega/p_\omega$, which is the conditional probability of the low signal given state $\omega$. Therefore, \eqref{eq:low signal} has a unique solution $x'_\omega$ for each $\omega\in\Omega$. Since $\gamma q^L_\omega\leq p_\omega$, we have that $x'_\omega\leq 1$. It follows that $\alpha=\sum_\omega x'_\omega p_\omega\leq 1$. Consequently, $p^L_\omega=x'_\omega p_\omega/\sum_{\omega'} \left(x'_{\omega'} p_{\omega'}\right)$ is uniquely determined and corresponds to $p^L\in\Delta(\Omega)$, as desired. 
\end{proof}

The following proposition generalizes Lemma \ref{lem:base}.

\begin{proposition}\label{prop}
 Consider a belief distribution $\gamma_1q^1+\cdots+\gamma_lq^l\in\Delta(\Delta(\Omega))$ with barycenter $p$. There exists a signal set $S=\{s_1,\ldots s_l\}$ and an experiment $\pi$ 
 such that, for each $j=1,\dots,l$, $\mathbb{P}_{\pi,\nu}(\cdot \mid \max \{i \ | \text{\ $s^m=s_i$ for some }m\in\{1,\dots,k\}\}=j)=q^j$ and  $\mathbb{P}_{\pi,\nu}(\max \{i \ | \text{\ $s^m=s_i$ for some }m\in\{1,\dots,k\}\}=j)=\gamma_j$.
\end{proposition}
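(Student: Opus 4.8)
The plan is to generalize Lemma \ref{lem:base} from the binary case to the $l$-signal case by an inductive ``peeling'' argument on the ordered signals. The key observation is that the event $\{\max\{i : s^m = s_i \text{ for some } m\} = j\}$ has a clean recursive structure: it occurs exactly when at least one draw equals $s_j$ and \emph{no} draw equals any $s_i$ with $i>j$. In particular, the event that all draws lie in $\{s_1,\dots,s_j\}$ (equivalently, the maximal index is at most $j$) is the natural object to track, since conditioning on a whole ``lower tail'' of signals behaves multiplicatively under i.i.d.\ sampling, just as the all-$s_L$ event did in the binary proof.

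Concretely, I would first set up unknowns $x^j_\omega = P_\pi(s_j \mid \omega)$, the conditional probability of signal $s_j$ in state $\omega$, subject to $\sum_{j=1}^l x^j_\omega = 1$ for each $\omega$. For a fixed state $\omega$, the probability that all $k$ draws fall in $\{s_1,\dots,s_j\}$ is $\bigl(\sum_{i\le j} x^i_\omega\bigr)^k$, so, writing $y^j_\omega = \sum_{i\le j} x^i_\omega$ for the cumulative conditional probability, the unconditional probability of the event ``maximal index $\le j$'' is $\sum_\omega p_\omega \sum_k \nu^k_\omega (y^j_\omega)^k$. The desired conclusions $\mathbb{P}_{\pi,\nu}(\max = j) = \gamma_j$ and $\mathbb{P}_{\pi,\nu}(\,\cdot \mid \max = j) = q^j$ are jointly equivalent, exactly as in \eqref{eq:low signal}, to the family of equations
\[
p_\omega \sum_{k=1}^n \nu^k_\omega \bigl(y^j_\omega\bigr)^k = \Bigl(\sum_{i=1}^{j}\gamma_i q^i_\omega\Bigr) \quad\text{for every } \omega \text{ and every } j=1,\dots,l,
\]
since the left-hand side is the joint probability that the state is $\omega$ and the maximal index is at most $j$, and the right-hand side is the same quantity computed from the target distribution. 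I would verify this equivalence by the same Bayes'-rule bookkeeping used in Lemma \ref{lem:base}.

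The existence argument then proceeds by solving these equations one cumulative level at a time. For each fixed $\omega$ and each $j$, the map $y \mapsto p_\omega \sum_k \nu^k_\omega y^k$ is continuous and strictly increasing on $[0,1]$, running from $0$ to $p_\omega$; since the target value $\sum_{i\le j}\gamma_i q^i_\omega$ lies in $[0, p_\omega]$ (it is a partial sum of the barycenter decomposition of $p_\omega$, hence between $0$ and $p_\omega$), there is a unique solution $y^j_\omega \in [0,1]$. The partial sums $\sum_{i\le j}\gamma_i q^i_\omega$ are nondecreasing in $j$, and strict monotonicity of the map forces the solutions $y^j_\omega$ to be nondecreasing in $j$ as well; this guarantees that the recovered increments $x^j_\omega = y^j_\omega - y^{j-1}_\omega$ (with $y^0_\omega=0$) are nonnegative. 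Finally $y^l_\omega = 1$ because the full partial sum is $\sum_i \gamma_i q^i_\omega = p_\omega$ (the barycenter condition), so the $x^j_\omega$ sum to one across $j$ and define a genuine experiment $\pi$.

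The main obstacle I anticipate is purely bookkeeping rather than conceptual: carefully translating the ``maximal disclosed index equals $j$'' event into the cumulative-tail form and confirming the equivalence of the probability conditions with the single family of monotone scalar equations, so that the one-dimensional Intermediate-Value-Theorem argument of Lemma \ref{lem:base} applies verbatim to each level. The one point that needs genuine care is checking that the target partial sums $\sum_{i\le j}\gamma_i q^i_\omega$ are monotone in $j$ and bounded above by $p_\omega$ for \emph{every} coordinate $\omega$ simultaneously; both follow immediately from $\gamma_i, q^i_\omega \ge 0$ and the barycenter identity $\sum_i \gamma_i q^i = p$, which is exactly what makes the per-$\omega$ monotonicity yield a coherent experiment once the increments are assembled.
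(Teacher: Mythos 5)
Your proof is correct, but it takes a genuinely different route from the paper's. The paper proceeds by induction on $l$: it merges the top two posteriors $q^{l-1},q^l$ into a single posterior $q^{l-1}_*$, applies the induction hypothesis to get an $(l-1)$-signal experiment, and then splits the top signal in two by applying Lemma \ref{lem:base} to the \emph{conditional} joint distribution $\tilde\nu$ of $(\omega,k)$ given the event that the maximal index is $l-1$ --- this is precisely why the paper needs the base lemma stated for state-dependent $\nu\in\Delta(\Omega\times\mathbb{N})$, since conditioning on that event correlates $k$ with $\omega$. You instead solve all levels simultaneously by working with the cumulative events $\{\max \le j\}$: the equivalence you assert --- that the target conditions hold if and only if $p_\omega\sum_k\nu^k_\omega (y^j_\omega)^k=\sum_{i\le j}\gamma_i q^i_\omega$ for all $(\omega,j)$ --- is valid (take differences across $j$ to recover $P(\omega,\max=j)=\gamma_j q^j_\omega$, and sum over $\omega$ using $q^j\in\Delta(\Omega)$), and each scalar equation has a unique solution by strict monotonicity of $y\mapsto p_\omega\sum_k\nu^k_\omega y^k$ on $[0,1]$, exactly as in \eqref{eq:low signal}. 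Monotonicity of the targets in $j$ then makes the increments $x^j_\omega=y^j_\omega-y^{j-1}_\omega$ nonnegative, and the barycenter identity forces $y^l_\omega=1$, so the increments assemble into a genuine experiment. What your approach buys: it is non-inductive, avoids any need to compute or even mention the conditional distribution $\tilde\nu$, pins down the experiment uniquely (given the signal labels), and makes transparent that the whole proposition is just the one-dimensional IVT argument of Lemma \ref{lem:base} run in parallel across cumulative levels. What the paper's induction buys is a compositional picture --- each step splits one signal into two by post-composing with a binary experiment --- which mirrors how the construction is reused elsewhere, but at the cost of the heavier conditional-distribution bookkeeping. One small caveat in both arguments: when some $\gamma_j=0$ the conditional statement $\mathbb{P}_{\pi,\nu}(\cdot\mid\max=j)=q^j$ is vacuous, which you (like the paper) should flag as the degenerate case.
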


\begin{proof}
Lemma \ref{lem:base} proves the proposition for the case $l=2$. We prove the proposition by induction on $l$. Assume the result holds for $l-1\geq 2$ and let $\gamma_1q^1+\cdots+\gamma_lq^l\in\Delta(\Delta(\Omega))$ be a belief distribution with barycenter $p$.  Define a new distribution $\gamma'_1q^1_*+\cdots+\gamma'_{l-1}q^{l-1}_*\in\Delta(\Delta(\Omega))$ as follows: (i) for each $i=1,\dots,l-2$, let $\gamma'_i=\gamma_i$ and $q^i_*=q^i$, and (ii) let $\gamma'_{l-1}=\gamma_{l-1}+\gamma_l$ and $q^{l-1}_*=(\gamma_{l-1}q^{l-1}+\gamma_l q^l)/(\gamma_{l-1}+\gamma_l)$.

By the induction hypothesis, there is a signal set $S'=\{s'_1,\ldots,s'_{l-1}\}$ and an experiment $\pi'$ 
that implements $\gamma'_1q^1_*+\cdots+\gamma'_{l-1}q^{l-1}_*$. That is, 
$$\mathbb{P}_{\pi',\nu}(\cdot \mid \max \{i \ | \text{\ $s^m=s'_i$ for some }m\in\{1,\dots,k\}\}=j)=q^j_*\text{ and }$$ $$  \mathbb{P}_{\pi',\nu}(\max \{i \ | \text{\ $s^m=s'_i$ for some }m\in\{1,\dots,k\}\}=j)=\gamma'_j.$$

Let $L$ be the event that 
$\max \{i \ | \text{\ $s^m=s_i$ for some }m\in\{1,\dots,k\}\}=l-1$. By construction, 
 $\mathbb{P}_{\pi',\nu}(\cdot \mid L)=q^{l-1}_*$. Let $\tilde \nu\in\Delta(\Omega\times\mathbb{N})$ be the joint conditional distribution of the variables $k$ and $\omega$ given $L$. 
Since $q^{l-1}_*=(\gamma_{l-1}q^{l-1}+\gamma_l q^l)/(\gamma_{l-1}+\gamma_l)$, by Lemma \ref{lem:base}, we can define a signal set $\tilde S=\{\tilde s_{1},\tilde s_2\}$ and an experiment $\tilde \pi$ for which the marginal of $\mathbb{P}_{\tilde \pi,\tilde\nu}$ on $\Omega$ is $q^{l-1}_*$ and 
$\mathbb{P}_{\tilde \pi,\tilde\nu}(\cdot \mid s^i=\tilde s_{1} \text{ for all }i=1,\dots,k)=q^{l-1}$ and $\mathbb{P}_{\tilde \pi,\tilde\nu}(s^i=\tilde s_{1} \text{ for all }i=1,\dots,k)=\gamma_{l-1}/(\gamma_{l-1}+\gamma_l)$. 

Let $S=\{s_1,\ldots,s_{l}\}$.
Define an experiment $\pi$ using the experiments $\pi'$ and $\tilde \pi$ as follows: $\pi(s_j|\omega)=\pi'(s_j'|\omega)$ for every $j=1,\dots,l-2$, $\pi(s_{l-1}|\omega)=\pi'(s_{l-1}'|\omega) \tilde\pi (\tilde s_1|\omega)$, and $\pi(s_l|\omega)=\pi'(s'_{l-1}|\omega)\tilde\pi (\tilde s_2| \omega)$. It follows by construction that $\pi$ has the desired properties with respect to $\nu$ and $\gamma_1q^1+\cdots+\gamma_lq^l$.
\end{proof}

We next provide an alternative proof of
Theorem \ref{th:information} for the transparent motives case where the sender's utility is state independent.

\begin{proof}[\textbf{Alternative Proof of Theorem \ref{th:information} under Transparent Motives}]
Let $\alpha_1q^1+\cdots+\alpha_l q^l\in\Delta(\Delta(\Omega))$ be the optimal policy of the sender. Without loss of generality, we can assume that each posterior $q^i$ corresponds to a different optimal action $a^i$ of the receiver, and that $u(a^i)<u(a^j)$ for $i<j$. Let $\pi$ be the experiment that implements the distribution according to Proposition \ref{prop} with $S=\{s_1,\ldots,s_l\}$. Consider the strategy for the sender that, for each type $k$, runs the experiment $\pi$ $k$ times and reveals to the receiver the highest of the realized signals she observes, that is, she reveals $\{(\pi,s^j)\}$, where $j=\max \{i \mid s^m=s_i \text{ for some }m\in\{1,\dots,k\}\}$.

The assessment $(\rho,\sigma)$ of the receiver is as follows. On the equilibrium path, the belief is given by $\rho(\{\pi,s_j\})=q^j$ for each $j$ and the strategy is $\sigma(\{\pi,s_j\})=a^j$. Let $a_*\in A$ be the worst action for the sender that is rationalizable for the receiver and let $q_*\in\Delta(\Omega)$ be a belief for which $a_*$ is optimal. For any $T\in\Delta^*$ that is off the equilibrium path, let $\rho(T)=q_*$ and $\sigma(T)=a_*$. It follows by construction that this profile forms a weak perfect Bayesian equilibrium.
\end{proof}


\begin{proof}[\textbf{Proof of Theorem \ref{thm:robustness}}]
Let $\alpha_1\delta_{q^1}+\cdots+\alpha_l\delta_{q^l}$ be the Bayesian persuasion policy.
Assume first that the condition in the statement of the theorem holds.
Let $\delta=\nu_0/\alpha_1$. Note that $\delta\leq\beta(q^1)$. Therefore, by definition, we can find $q_*$ such that
$\delta p+(1-\delta)q_*=q^1$.
 
  Consider the posterior distribution $\alpha'_1q^1_*+\cdots+\alpha'_lq^l_*$
where $q^1_*=q_*$, $\alpha'_1=(\alpha_1-\nu_0)/(1-\nu_0)$, and for $i=1,\dots,l$, $q^i_*=q^i$ and $\alpha'_i=\alpha_i/(1-\nu_0)$. Note first that
$$\sum_{i=1}^l\alpha'_i=\frac{\alpha_1-\nu_0+\sum_{i=2}^l\alpha_i}{1-\nu_0}=\frac{1-\nu_0}{1-\nu_0}=1.$$
In addition, by construction, $(1-\delta) q_*=q^1-\delta p$.  Hence, multiplying by $\alpha_1$, we have 
$(\alpha_1-\nu_0)q_*=\alpha_1 q^1-\nu_0p$.
Therefore,
\begin{align*}
\sum_{i=1}^l\alpha'_i q^i_* &= \frac{(\alpha_1-\nu_0)q^1_*+\sum_{i=2}^l\alpha_iq^i}{1-\nu_0} \\
&=\frac{\alpha_1 q^1-\nu_0p+\sum_{i=2}^l\alpha_iq^i}{1-\nu_0}\\
&= \frac{p-p\nu_0}{1-\nu_0}\\
&=p.
\end{align*}
Thus, the distribution $\alpha'_1q^1_*+\cdots+\alpha'_lq^l_*$ has barycenter $p$. 

Let $\nu'\in\Delta(\mathbb{N})$ be such that
$\nu'_k=\nu_k/(1-\nu_0)$ for every $k\geq 1$.
Let $\pi$ be the experiment constructed in the alternative proof of Theorem \ref{th:information} (for the transparent motives case) that implements $\alpha'_1q^1_*+\cdots+\alpha'_lq^l_*$ with respect to $\nu'$. That is, $S=\{s_1,\ldots,s_l\}$ and $\pi:\Omega\to S$ satisfies
\begin{align*}
\mathbb{P}_{\pi,\nu'}(\cdot \mid\max \{i \ | \text{\ $s^m=s_i$ for some }m\in\{1,\dots,k\}\}=j) &=q^j_* \\
\text{and}\quad \mathbb{P}_{\pi,\nu'}(\max \{i \ | \text{\ $s^m=s_i$ for some }m\in\{1,\dots,k\}\}=j) &=\alpha'_j
\end{align*}
for every $j=1,\dots,l$.

Suppose the sender plays as follows. For any $k\geq 1$, she generates $k$ $k$ conditionally i.i.d.\ signals according to $\pi$. Let $j=\max \{i \ | \text{\ $s^m=s_i$ for some }m\in\{1,\dots,k\}\}$ be the index of the highest signal she observes. If $j>1$, she reveals $\{(\pi,s_j)\}$ to the receiver; if $j=1$, she reveals $\varnothing$. Suppose the receiver plays action $a^j$ upon observing $T=\{(\pi,s_j)\}$ and plays $a^1$ following any other $T\in\Delta^*$. We claim that these strategies are part of a wPBE.

Note that if the sender implements this strategy, then the resulting posterior distribution of the receiver following the evidence $\varnothing$ is
\begin{align}\label{eq:posterior}
\frac{\nu_0p+(1-\nu_0)\alpha'_1q^1_*}{\nu_0+(1-\nu_0)\alpha'_1}.    
\end{align}
To see this, note that $\varnothing$ is revealed if either $k=0$ or if $k>0$ and $s^m=s_1$ for all $m=1,\dots, k$. The former holds with probability $\nu_0$ and induces a posterior $p$ while the latter holds with probability $(1-\nu_0)\alpha'_1$ and induces a posterior $q^1_*$. Since $\alpha'_1=(\alpha_1-\nu_0)/(1-\nu_0)$,  the posterior distribution in equation \eqref{eq:posterior} equals 
$\left(\nu_0 p+ (\alpha_1-\nu_0)q_*\right)/\alpha_1=q^1$.

In addition, revealing a signal $s_j$ for $j>1$ generates a posterior of 
$q^j$ and occurs with probability $(1-\nu_0)\alpha'_j=\alpha_j$. This induces the desired distribution of posteriors. If the receiver's off-equilibrium-path belief is always equal to $q^1$, the receiver's strategy is optimal and so is the sender's.


Conversely, consider a wPBE that implements the Bayesian persuasion policy and assume $\nu_0>0$. (If $\nu_0=0$, the result holds by Theorem \ref{th:information}.) 
We first claim that the action that is played when $\varnothing$ is revealed must be the lowest action $a^1$. First note that since $\nu_0>0$, the action played following $\varnothing$ must be one of the Bayesian persuasion actions. If some action $a^j\neq a_1$ is played following $\varnothing$, then for any disclosure leading to $a^1$, the sender would prefer to deviate to disclosing $\varnothing$. 

Let $\alpha'_1$ be the probability of sending $\varnothing$ conditional on $k>0$, and let $p_*$ be the conditional probability over $\Delta(\Omega)$ given the event that the sender reveals $\varnothing$ and $k>0$. By definition, it must be that 
$\nu_0+(1-\nu_0)\alpha'_1=\alpha_1$ and 
\[
\frac{\nu_0}{\nu_0+(1-\nu_0)\alpha'_1}p+\frac{(1-\nu_0)\alpha'_1}{\nu_0+(1-\nu_0)\alpha'_1}p_*=q^1.
\]
By definition of $\beta(q^1)$, we must have 
\[
\frac{\nu_0}{\nu_0+(1-\nu_0)\alpha'_1}\leq\beta(q^1).
\]
Since $\alpha'_1=(\alpha_1-\nu_0)/(1-\nu_0)$, we have that
\begin{align*}
\beta(q^1)\geq\frac{\nu_0}{\nu_0+(1-\nu_0)\alpha'_1}=\frac{\nu_0}{\nu_0+(1-\nu_0)\frac{\alpha_1-\nu_0}{1-\nu_0}}=\frac{\nu_0}{\nu_0+\alpha_1-\nu_0}=\frac{\nu_0}{\alpha_1},
\end{align*}
and therefore, $\nu_0\leq\beta(q^1)\alpha_1$.
\end{proof}

\printbibliography

\end{document}